\providecommand{\keywords}[1]{\textbf{\textit{Keywords---}} #1}
\title{
Entropy of Exchangeable Random Graphs}
\author{
    Anda Skeja$^{1,2}$\footnote{This work was supported by the European Research Council under Grant CoG 2015-682172NETS, within the Seventh European Union Framework Program and the Swiss Confederation by the Excellence Scholarship for Foreign Scholars.} \quad Sofia C. Olhede$^{1*}$ \\  
    {\small $^1$Institute of Mathematics, EPFL} \\
    {\small $^2$Department of Mathematics, Uppsala University}}\usepackage[utf8]{inputenc} 
\numberwithin{equation}{section}
\newcommand{\rtwo}{\rho^{(2)}}
\newcommand{\kap}[1]{\rtwo(1)-\lambda^2}
\definecolor{thecolor}{rgb}{0.1,0.5,1}
\newtheorem{theorem}{Theorem}
\newtheorem{proposition}{Proposition}
\newtheorem{cor}{Corollary}
\newtheorem{lemma}{Lemma}
\newtheorem{definition}{Definition}
\definecolor{darkgreen}{rgb}{0.0, 0.2, 0.13}  
\date{}
\begin{document}
\maketitle
\begin{abstract}
Quantifying the complexity of large graphs requires measures that extend beyond predefined structural features and scale efficiently with graph size. This work adopts a generative perspective, modeling large networks as exchangeable graphs to quantify the information content of their generating mechanisms via graphon entropy. As a graph property, graphon entropy is invariant under isomorphisms, making it an effective measure of complexity; however, it is not directly computable. To address this, we introduce a suite of graphon entropy estimators, including a nonparametric estimator for broad applicability and specialized versions for structured graphons arising from well-studied random graph models such as Erd\H{o}s–R\'enyi, Chung–Lu, and stochastic block models. We establish their large-sample properties, deriving convergence rates and Central Limit Theorems. Simulations illustrate how the nonparametric graphon entropy estimator captures structural variations in graphs, while real-world applications demonstrate its role in characterizing evolving network dynamics.

\end{abstract}
\keywords{Graph complexity, graphon entropy estimation, random graphs.}

\section{Introduction}
Large graphs possess intricate structures shaped by the interplay of order and randomness. Quantifying their complexity is essential for understanding the underlying patterns. While complexity lacks a universal definition, a common approach is to measure a graph’s information content. Entropy, a key concept in information theory \cite{shannon1948mathematical}, provides a rigorous framework for this purpose and has been widely extended to graphs (see Section~\ref{related-work} for a review). However, these measures often rely on predefined features or scale poorly with graph size. The challenge is particularly evident in graphs that are not deterministic, where connectivity patterns exhibit variability that cannot be easily reduced to a fixed set of descriptors.

To overcome this limitation, we shift our focus from structural summaries to the generative process of graphs. In doing so, we adopt a well-established framework for modeling large graphs: \emph{exchangeability}, where edges are assumed to be generated according to a symmetric measurable function known as a \emph{graphon}~
\cite{lovasz2012large}. Graphons offer a probabilistic representation of large networks, naturally encoding both local and global structure.  The entropy of a graphon, introduced in theoretical works by~\cite{janson2013graphons,hatami2018graph}, offers a principled approach to quantifying network complexity from a generative perspective. However, its potential as a complexity measure remains largely unexplored, and methods for estimating graphon entropy from a single observed network are lacking. 

\textbf{Our contributions.} We develop a suite of graphon entropy estimators adaptable to diverse network structures. Our primary approach is a nonparametric estimator, providing a flexible method for graphon entropy estimation in real-world networks while addressing the important aspect of bandwidth selection. To optimize performance for structured graphon classes, we develop specialized entropy estimators and establish their convergence rates and Central Limit Theorems (CLTs). We begin with entropy estimation under a constant graphon, corresponding to the limiting behavior of Erdős–Rényi graphs~\cite{erdds1959random}, where only the marginal edge probability is considered. In this case, we derive a CLT for the entropy estimator using standard asymptotic arguments. We then extend this framework to separable graphons, which correspond to the limits of Chung–Lu~\cite{chung2002average} graph sequences and incorporate degree-dependent edge probabilities. In this setting, we derive a more general CLT, addressing structural dependencies using U-statistics. Next, we analyze block-constant graphons, corresponding to stochastic block models~\cite{holland1983stochastic}, and establish convergence rates for entropy estimation in this setting. To assess the practical applicability of our approach, we conduct an extensive simulation study, evaluating the performance of our estimators, comparing them across different graphon structures, and analyzing the relationship between graphon entropy and graph topology. Additionally, we compare the nonparametric graphon entropy estimator with traditional network entropy measures, highlighting its distinct advantages in capturing structural complexity. Finally, we apply the nonparametric graphon entropy estimator to real-world temporal graphs, demonstrating its effectiveness in capturing the evolution of networks over time.

Our approach fundamentally differs from classical entropy estimation methods and other graph entropies reviewed in Section~\ref{related-work}. Traditional entropy estimators typically assume i.i.d. samples~\cite{gyorfi1991nonparametric,beirlant1997nonparametric,verdu2019empirical}, whereas we estimate entropy from conditionally independent edges within a single observed network. As previously discussed, our framework also differs from existing graph entropy measures in that we focus on the graph-generating mechanism.
 
The remainder of this paper is structured as follows. Section~\ref{sec-prelim} introduces the necessary notation and background. Section~\ref{sec-estimation-arbitrary} develops the estimation of graphon entropy for arbitrary exchangeable graphs and establishes its convergence properties. Section~\ref{sec:est-random-graphs} examines the estimation of graphon entropy for common random graph models and analyzes their asymptotic behavior. Section~\ref{Sec:SimSt} presents a comprehensive simulation study evaluating estimator performance, its relationship with network structure, and comparisons with network entropy measures. Section~\ref{Sec:RealData} applies our nonparametric estimator to real-world networks, demonstrating its effectiveness in tracking structural evolution. Finally, Section~\ref{Sec:Disc} concludes with a discussion of our findings.

\subsection{Related Work}\label{related-work}

\subsubsection{Entropy Measures for Networks}
Entropy as a topological measure of the information content of graphs was first introduced by~\cite{rashevsky1955life} in 1955 and was picked up in subsequent works~\cite{trucco1956note,mowshowitz1968entropy1,mowshowitz1968entropy2,mowshowitz1968entropy3,mowshowitz1968entropy4,mowshowitz2012entropy}. A slightly different approach was introduced by~\cite{korner1973coding}, where graph entropy indicates the number of bits that must be transmitted to clear up the ambiguity of a graph vertex. Yet another graph entropy, the so-called chromatic entropy, was proposed by~\cite{alon1996source}. An information-theoretic entropy definition for graph complexity was first introduced by~\cite{mowshowitz1968entropy1}, where it was posited that an effective complexity index for a mathematical object should be based on a partial set of invariants shared by isomorphic graphs. In particular, this set was identified as the automorphism group of the graph. These entropy measures are suited for small graphs, i.e. having fewer than thirty nodes. Over time, the definition of entropy has evolved to encompass additional graph features such as vertices, edges, and degree distributions, which has received criticism for being feature-dependent~\cite{zenil2014correlation} and not capturing the whole graph structure. Similarly, algorithmic complexity (Kolmogorov complexity)~\cite{kolmogorov1968logical} is theoretically appealing because it captures the shortest possible description of a graph, but its incomputability makes it impractical. In practice, approximations via compression-based methods often fail to scale efficiently with increasing graph size. The entropy defined by the random walker approach~\cite{small2013complex} captures the uncertainty in navigation across the network, while the Kolmogorov-Sinai entropy~\cite{demetrius2005robustness} quantifies the overall complexity by measuring the rate of information loss in its dynamics. For related work defining feature-dependent entropy measures for graphs, see~\cite{bianconi2009entropy,aziz2021graph,dehmer2008novel,ye2014approximate}.
A review of the graph entropy measures mentioned above is covered in~\cite{dehmer2011history}. Furthermore, since entropy represents a mathematical limit on how well data from the source can be losslessly compressed, it has been studied in the context of compression for various random graph models. The Shannon entropy of the isomorphism class of a random graph was called the ``structural entropy'' and was asymptotically characterized for the Erd\H{o}s-R\'enyi model in~\cite{choi2012compression}. The same measure has been asymptotically characterized for preferential attachment graphs; see~\cite{luczak2019compression,luczak2019asymmetry}. There is also a large literature on universal compression of graphs from various distribution classes~\cite{zhang2013universal,ganardi2019universal,delgosha2022universal,bhatt2023universal,bianconi2009entropy,paton2022entropy}. The entropy of the ensembles of stochastic block models has been studied from a statistical mechanics approach by~\cite{peixoto2012entropy}. The Shannon entropy for network ensembles and its connection with the Gibbs and von Neumann entropies were discussed by~\cite{anand2009entropy}. The von Neumann entropy for the graphs was introduced by~\cite{braunstein2006laplacian}, and~\cite{passerini2008neumann} linked it to Shannon entropy of degree sequences, suggesting higher entropy for heterogeneous networks, though its relation to community structure remains unclear. While structured graphs show lower entropy, networks with weakly separated communities may deviate~\cite{de2016spectral}, leaving its full interpretation unsettled. Authors of~\cite{bordenave2015large} introduced an entropy functional for studying large deviations in Erd\H{o}s-R\'enyi graphs. 

\subsubsection{Nonparametric Entropy Estimation} Entropy is one of the most extensively studied functionals of probability density functions, with significant research focused on its nonparametric estimation. For a comprehensive discussion on entropy estimation and nonparametric techniques, see~\cite{beirlant1997nonparametric, paninski2003estimation}. These techniques include methods based on histograms as studied by~\cite{hall1993estimation}, kernel density estimates by~\cite{paninski2008undersmoothed}, sample spacings in the univariate case by~\cite{haje2009entropy}, and more recently weighted estimators~\cite{moon2016nonparametric,sricharan2013ensemble,Berrett2019EntropyEstimation}. We focus on entropy estimation studies most relevant to our research rather than providing an exhaustive list for i.i.d. observations. Recent works on entropy estimation of discrete probability distributions include maximum likelihood and minimax estimation of functionals, see~\cite{jiao2017maximum,jiao2015minimax}, where the authors point out that the maximum likelihood estimator of the entropy is sub-optimal in sample complexity, i.e. it is sub-optimal in the large alphabet--small sample setting and is efficient in large sample settings. For entropy estimation based on kernel density estimation, numerous studies have been conducted; for example, ~\cite{liu2012exponential,paninski2008undersmoothed} among others argue for under-smoothing during entropy estimation. In other words, it is argued that using a smaller bandwidth than that required for consistent kernel density estimation yields consistent entropy estimates. 

\section{Notation and Background}\label{sec-prelim}
All graphs in this paper are undirected and simple, i.e. without self-loops and multiple edges. For a graph $G$, $V(G)$ denotes the set of vertices or nodes, and $E(G)$ denotes the set of edges. We say two nodes $i$ and $j$ are adjacent if they are connected by an edge. Given a graph $G$ with $|V(G)|=n$, we denote its adjacency matrix by $A\in \{0,1\}^{n \times n}$, where $A_{ij}$ equals one if nodes $i,j\in [n]$ are adjacent, and zero otherwise. We denote the degree of vertex $i$ by $d_i$. We let $\log$ denote the natural logarithm. See Appendix D.0 for preliminaries.
 
 \subsection{Graphon models}
 This section introduces the graphon models used throughout the paper and states the representation theorem by Aldous and Hoover~\cite{aldous1981representations,hoover1979relations}. These models serve as limiting objects for various classes of random graphs.
\begin{definition}[Constant graphon]\label{def-er}
   A constant graphon is a function $W^c:[0,1]^2\to[0,1]$ defined by $$W^c(x,y)=p, \quad \forall (x,y) \in [0,1]^2, \quad p \in [0,1].$$ It represents the limiting object of a sequence of Erdős–Rényi random graphs.
\end{definition}
\begin{definition}[Separable graphon]\label{def-cm} A separable graphon is a function \( W^s: [0,1]^2 \to [0,1] \) defined by  
\[
W^s(x,y) = g(x) g(y), \quad \forall (x,y) \in [0,1]^2,
\]
for some function \( g: [0,1] \to [0,1] \). It corresponds to the limit of a sequence of Chung–Lu graphs.

\end{definition}
\begin{definition}[Block-constant graphon]\label{def-sbm} A block-constant graphon is a function \( W^b: [0,1]^2 \to [0,1] \) defined by  
\[
W^b(x,y) = \sum_{a=1}^k\sum_{b=1}^k\theta_{ab}\mathds{1}((x,y)\in \omega_{ab}),
\]
where \( k \leq n \) is the number of blocks, and  
$
\omega_{ab}=\{(x,y):\, \Tilde{H}(a-1)\leq x< \tilde{H}(a), \  \tilde{H}(b-1)\leq y< \tilde{H}(b)\}.
$
The function \( \tilde{H}(\cdot) \) is the aggregate block width function, defined as  
\[
\tilde{H}(u)=\sum_{a=1}^{\lfloor u \rfloor} \frac{1}{n}h_a,\quad u\in [0,k], \; \tilde{H}(u)\in \Big\{0,\frac{h_1}{n},\frac{h_1+h_2}{n},\ldots,1\Big\},
\]
where \( h_a \) is the number of nodes in group \( a \in [k] \).  
It corresponds to the limit of a sequence of stochastic block models.

\end{definition}
\begin{definition}\label{assump:smooth}($\alpha$-H\"older graphon) An \(\alpha\)-Hölder graphon is a function \( W: [0,1]^2 \to [0,1] \) such that  
 
\[
W\in\mathrm{H\ddot{o}lder}^{\alpha}(M) \iff \sup_{(x,y)\neq(x',y')\in(0,1)^2} \frac{|W(x,y)-W(x',y')|}{|(x,y)-(x',y')|^{\alpha}}\leq M<\infty.
\]
\end{definition}
    Throughout this paper, we will use the terms $\alpha-$H\"older continuous and smooth interchangeably.
\begin{theorem}[Aldous--Hoover~\cite{aldous1981representations,hoover1979relations}]\label{thm-aldous}
Let $A$ be a jointly exchangeable random array. Then there exists an i.i.d. sequence $\xi=(\xi_1,...,\xi_n)$ following $U(0,1)$, a random variable $\gamma \sim U(0,1)$ independent of $\xi$, and a function $W:[0,1]^3 \to [0,1]$ such that \begin{equation}\label{graphon}
    Pr(A_{ij}=1 | \xi, \gamma) = W(\xi_i,\xi_j,\gamma),
\end{equation} and $A_{ij}$ are conditionally independent across $i,j$ given $\xi$ and $\gamma$. 
\end{theorem}\label{thm-aldous-hoover}
Since we observe only one realization of \( A \), the dependence on \( \gamma \) cannot be estimated and is suppressed, assuming the graph is disassociated. Thus, we use \( W(\xi_i,\xi_j) \) for \( 1 \leq i,j \leq n = |V| \) throughout.
\subsection{Graphon entropy}
Graphon entropy is a graph property and is determined solely by the underlying graph limit~\cite{janson2013graphons,hatami2018graph}. It is the same for any two equivalent $[0,1]-$valued graphons. This makes it well-defined for every equivalence class of such graphons~\cite{janson2016graphons}. 
\begin{definition}[Entropy of a graphon~\cite{janson2013graphons,hatami2018graph}]\label{def-entropy-hatami}
We define the binary entropy function $h:[0,1]\rightarrow \mathbb{R}_{+}$ to be
\begin{equation}
    \label{eqn:binary}
    h(x)=-x\log(x)-(1-x)\log(1-x),
\end{equation}\label{binaryentropy}
for $x\in [0,1]$ where $h(0)$ and $h(1)$ are defined to be zero to ensure continuity. The entropy of a graphon \( W(x,y) \) is defined as  
\begin{equation}
\label{eqn:Hatemi-ent}
\mathrm{H}(W) = \iint_{[0,1]^2} h(W(x,y))\,\mathrm{d}x\,\mathrm{d}y.     
\end{equation}
\end{definition}

Graphon entropy quantifies the balance between structure and randomness in graphs, offering insight into different graph models. For example, a \textit{random-free} graphon, which takes values only in $\{0,1\}$ almost everywhere, represents an extreme case of predictability with minimal entropy. At the other extreme, highly random graphs, such as the Erdős–Rényi model with edge probability $1/2$, maximize entropy at 1 (base-2 logarithm). This occurs when the graphon function is constant, $W(x,y) = 1/2$, indicating complete edge independence and forming the limit of \textit{quasi-random} graph sequences~\cite{hatami2018graph,lovasz2006limits}. Between these extremes, structured models like stochastic block models and Chung–Lu graphs introduce patterns that systematically reduce entropy. This continuum—from highly ordered to maximally disordered structures—underscores the relevance of graphon entropy in characterizing complexity in network analysis.  See Appendix D.0 for a discussion on the relationship between the entropy of an exchangeable graph and graphon entropy. 

Furthermore, since entropy is the fundamental building block in information theory, graphon entropy provides a foundation for developing multivariate complexity measures and their estimators across multiple networks, as explored in subsequent work (e.g.,~\cite{skeja2024quantifying}). 

\subsection*{Sparse Regime}

We will also consider scaled graphons, aiming to estimate the graphon entropy of an exchangeable sparse graph, as described in Equation~\eqref{sparse-graphon} (Appendix D.0). Given that $0 < \rho_n |W|_{\infty} < 1$ for $0 < \rho_n \leq 1$, the entropy of the scaled graphon is defined as follows:
\begin{equation}\label{sparse-ent}
\mathrm{H}(f)=\iint_{[0,1]^2} h( f(x,y))\ \mathrm{d}x \mathrm{d}y,
\end{equation}
where $f(x,y)=\rho_n W(x,y)$. Throughout this paper, we use $f$ to denote the scaled graphon across various sparsity regimes.

Having discussed graphon entropy, including its scaled variant, the next step is to estimate it from observed networks. In Section 3, we introduce a nonparametric graphon entropy estimator to characterize graph complexity from a generative perspective.
\section{Graphon Entropy Estimator of an Exchangeable Graph With a Smooth Graphon Representative }\label{sec-estimation-arbitrary}

Graphon entropy, computed by integrating over the unit square, remains invariant under interval permutations and measure-preserving transformations, making it an effective measure of structural complexity. To estimate it from a single network observation, we assume an underlying smooth graphon representation necessary for tractable estimation. We approximate this graphon using a stochastic block model with a progressively increasing number of blocks~\cite{gao2015rate,klopp2017oracle,Olhede_2014} (see Equation~\eqref{est-graphon} in Appendix D.0). We then compute the entropy of the estimated graphon, obtaining an estimator that quantifies the network’s inherent randomness and structure. The resulting entropy estimator is given by:

\begin{equation}\label{entropy-estimator} \widehat{\mathrm{H}}(f)=-\iint_{(0,1)^2}\hat{f}(x,y;h) \log(\hat{f}(x,y;h)) + (1-\hat{f}(x,y;h))\log(1-\hat{f}(x,y;h)) \ \mathrm{d}x \mathrm{d}y, \end{equation} where $\hat{f}(x,y;h)$ is defined in Equation~\eqref{est-graphon} (Appendix D.0). Similarly any consistent graphon estimator could be used for this step. Our nonparametric graphon entropy estimator satisfies the convergence rates detailed in Theorem~\ref{thm-conv-rate-entropy} for $k=\lceil{\rho_n^{1/2} n^{1/(1+\alpha)}}\rceil$.
In the following subsection, we address the important aspect of bandwidth selection for the graphon entropy estimator.
\begin{theorem}
\label{thm-conv-rate-entropy} Let the estimator of $\mathrm{H}(f)$ from a single realization of $G(n,f)$ be obtained as described in~\eqref{entropy-estimator}, where $0<\rho_n\leq 1$. Then, for any constant $C>0$ (with $C_1>0$ being a function of $C$), there is a constant $C'>0$ depending only on $M$ and $C$ such that the following inequality 
   \begin{align*}
    \small
        |\widehat{\mathrm{H}}(f)-\mathrm{H}(f)|&\leq  C_1\left(\rho_n^{-2\alpha/(\alpha+1)}n^{-2\alpha/(\alpha+1)}+\frac{\rho_n \log n}{n}+\frac{\rho_n^2}{n^{\alpha}}\right)^{1/2}\\
&+\mathcal{O}\left(\rho_n^{-2\alpha/(\alpha+1)}n^{-2\alpha/(\alpha+1)}+\frac{\rho_n \log n}{n}+\frac{\rho_n^2}{n^{\alpha}}\right),
    \end{align*}
    holds with probability at least $1-\exp(-C'n)$, uniformly over $f\in \mathcal{F}_{\alpha}(M)$.
\end{theorem}
\begin{proof}
    See Appendix A.1.
\end{proof}
Nonparametric entropy estimation for graph data differs fundamentally from its application to i.i.d. samples, rendering traditional methods ineffective. To serve as a meaningful complexity measure, the graphon entropy estimator relies on an initial graphon estimation step, which cannot be excluded. While resembling density estimation, graphon estimation distinctively incorporates a clustering step. Unlike kernel density-based entropy estimators, which require bandwidth reduction for nonparametric convergence~\cite{paninski2008undersmoothed}, Proposition~\ref{prop-oracle} shows that the optimal bandwidth for graphon entropy cannot be too small due to clustering challenges~\cite{choi2012stochastic}.

\subsection{Oracle Graphon Entropy Estimator} 
Equation~\eqref{entropy-estimator} relies on the bandwidth $h$, which is a necessary parameter for accurate estimation. While automatic bandwidth selection is discussed in the context of nonparametric graphon estimation~\cite{Olhede_2014}, we extend this oracle-based approach to determine the optimal bandwidth for the graphon entropy estimator and justify its applicability to the empirical estimator. The oracle estimator assumes knowledge of the latent variables $\xi_i$ ($1 \leq i \leq n$), which are otherwise unknown, to bound the performance of data-driven estimators. It mirrors Equation~\eqref{entropy-estimator} but uses a unique labeling $\Tilde{z}$ derived from the latent vector $\xi$. Following~\cite{Olhede_2014}, we have $\Tilde{z} = \min {\lceil (i)^{-1}/h \rceil, k}$, where $(i)^{-1}$ ranks the $i^{th}$ element of $\xi$ in ascending order, grouping indices into sets of size $h$, with the last set adjusted for remainders. 

\begin{definition}[Oracle entropy estimator] Based on the explanation above, for $\Tilde{z}$ the oracle membership vector, the oracle entropy estimator takes the following form
\begin{equation}
    \label{oracle-entropy-estimator}
     \widehat{\mathrm{H}}^*(f)=-\iint_{(0,1)^2}\hat{f}^*(x,y) \log(\hat{f}^*(x,y)) + (1-\hat{f}^*(x,y))\log(1-\hat{f}^*(x,y)) \  \mathrm{d}x \mathrm{d}y,
\end{equation}
\end{definition}
where $\hat{f}^*(x,y)=\rho_n^{+}\Bar{A}^*_{\lceil xk\rceil,\lceil yk\rceil}(\tilde{z})$ follows the same form as Equation~\eqref{est-graphon} of Appendix D.0, with $\hat{z}$ replaced by the oracle clustering $\tilde{z}$.
Specifically,\begin{equation}
\Bar{A}_{ab}^*(\tilde{z})=\frac{\sum_{i<j}A_{ij}\mathds{1}(\Tilde{z}_i=a)\mathds{1}(\Tilde{z}_j=b)}{\sum_{i<j}\mathds{1}(\Tilde{z}_i=a)\mathds{1}(\Tilde{z}_j=b)},
\label{oracle-A}
\end{equation}
for $1\leq a,b \leq k$.

We now provide the bias and variance from the oracle graphon entropy estimator to discuss automatic bandwidth selection for graphon entropy estimation. 
    
\begin{proposition}\label{prop-oracle}Let $G(n,f)$  be generated according to Definition~\ref{assump:smooth}, and consider the graphon entropy estimated using Equation~\eqref{oracle-entropy-estimator} with the oracle estimator of the graphon. The bias and variance of the oracle graphon entropy estimator are given by:
\begin{align*}
    |\mathbb{E}[{\mathcal{\widehat{\mathrm{H}}}^*(f)}]-\mathrm{H}(f)|&\leq \left(\frac{1}{\rho_n\min \bar{f}_{a,b}}+\frac{1}{\min (1-\rho_n\bar{f}_{a,b})}\right)M^2_f\left(\frac{h}{n} \right)^2,\\
    \mathrm{Var}(\mathrm{H}^{*}(f))&=O(1/n^2).
\end{align*}
\end{proposition}
\begin{proof}
See Appendix A.2.   
\end{proof}

Unlike the oracle esimator, the empirical approach relies on clustering, which deteriorates for smaller groups~\cite{choi2012stochastic}. Proposition~\ref{prop-oracle} shows that reducing the bandwidth decreases estimation bias, but the bandwidth must not drop below the threshold for effective clustering. The choice of blocks affects the graph structure: fitting one block simplifies it to an Erd\H{o}s-R\'enyi graph, while fitting as many blocks as nodes is impractical. Thus, the automatic bandwidth selection method discussed by~\cite{Olhede_2014} remains applicable for graphon entropy estimation.
\section{Graphon Entropy Estimators Under Special Assumptions}\label{sec:est-random-graphs}
In this section, we introduce tailored graphon entropy estimators and analyze their theoretical properties under specific conditions where additional structural information is available or certain assumptions hold. These scenarios, which highlight the trade-off between bias and variance, include cases where the graphon is constant, separable, or block-constant.
\subsection{Estimating the Entropy of a Constant Graphon}
For the Erdős–Rényi (ER) model, each potential edge between a pair of vertices is included independently with a fixed probability $p$ (where 
$0<p<1$), making it the simplest and one of the most well-studied random graph models. This setting is discussed in Definition~\ref{def-er}.
The entropy of this model then takes the following form:
\begin{equation}
    \label{eqn:binary2}
    {\mathrm{H}}(f^c)=\mathrm{H}(p)=-p\log(p)-(1-p)\log(1-p).
\end{equation}
As noted in Definition~\ref{def-entropy-hatami}, $\mathrm{H}(0) = \mathrm{H}(1) = 0$ to ensure continuity. To estimate this quantity, we compute $\widehat{p}$—the proportion of observed edges—based on the edge variables ${A_{ij}}_{i<j}$. The estimator is given by:
\begin{equation}
\widehat{p}=\frac{\sum_{i<j}A_{ij}}{\binom{n}{2}}.
\end{equation}
We note directly the first two moments from $\widehat{p}$ to be as follows: 
\begin{align}
\mathbb{E}\{  \widehat{p}\}&=p,\quad
\mathrm{Var}\{ \widehat{p} \}=\frac{p(1-p)}{\binom{n}{2}}
.\end{align}
The estimator of the constant graphon entropy is then given as:
\begin{equation}
    \label{eqn:binary3}
    \widehat{\mathrm{H}}(p)=- \widehat{p}\log( \widehat{p})-(1-\widehat{p})\log(1-\widehat{p}).
\end{equation}
Since the $A_{ij}$ are i.i.d. under this model, the Central Limit Theorem can be applied to $\sqrt{n^2}(\widehat{p} - p)$ to establish the asymptotic normality of $\widehat{p}$. Using the Delta method, we derive the asymptotic normality of  
$\sqrt{n^2}( \widehat{\mathrm{H}}(p)-{\mathrm{H}}(p))$. Specifically, this approach yields the following result:
\begin{equation}\label{er-normality}
  \!\!  \! \! \sqrt{ n^2}\{ \widehat{\mathrm{H}}(p)-{\mathrm{H}}(p)\}\overset{{\mathcal{L} }}{\rightarrow} \mathrm{N}\left(0,
   2p(1-p)(\log((1-p)/p))^2\right),
    \end{equation}
which gives the limiting distribution of the entropy for the constant edge probability random graph. 
\subsection{Estimating the Entropy of a Separable Graphon}
As mentioned in Definition~\ref{def-cm}, the separable graphon corresponds to the limit of graph sequences generated from the Chung--Lu model, and now we define the entropy of such graphons.
\begin{definition}[Entropy of a Separable Graphon]\label{def-cm-entropy} Let $f(x,y)=\rho_n g(x)g(y)$ for $0<g(x),g(y)<1$ and $0<x,y<1$ denote a separable graphon. Then, the entropy of the separable graphon is given by:
\begin{equation*}
    \mathrm{H}(f^s)=-\iint_{[0,1]^2}\{\rho_n g(x)g(y)\log(\rho_ng(x)g(y))+(1-\rho_ng(x)g(y))\log(1-\rho_ng(x)g(y))\}  \mathrm{d}x \mathrm{d}y.
\end{equation*}
\end{definition}
 To estimate the separable graphon from data, we define the estimator of $g(\cdot)$ to be as follows:
\begin{equation}
\widehat{g}(\xi_i)= \frac{d_i}{\sqrt{\|d\|_1}},  \quad i=1,\dots, n,     
\label{ghat2}
\end{equation}
where $d_i$ is the $i$th degree of our observed graph, and $\xi$ is the latent vector as detailed in Theorem~\ref{thm-aldous}. 
We note that this estimator has also been used parametrically by~\cite{franke2016network}. A useful property of this estimator is its simplicity, which facilitates easy computation.
Having defined the estimator for $g(\cdot)$ in Equation~\ref{ghat2}, we now construct the entropy estimator.
\begin{definition} Consider the setting of Definition~\ref{def-cm-entropy}, and let $\hat{g}(\cdot)$ be as given in Equation~\eqref{ghat2}. We define the entropy estimator of the separable graphon from the observed adjacency matrix $A\in \{0,1\}^{n\times n}$ as follows:
    \begin{align*}
   \widehat{ \mathrm{H}}(f^s)&=-\frac{1}{{n \choose 2}}\sum_{i<j}\{\hat \rho_n \hat {g}(\xi_i)\hat{g}(\xi_j)\log(\hat \rho_n \hat{g}(\xi_i)\hat{g}(\xi_j))+(1-\hat{\rho}_n \hat{g}(\xi_i)\hat{g}(\xi_j))\log(1-\hat{\rho}_n \hat{g}(\xi_i)\hat{g}(\xi_j))\}.
\end{align*}
\end{definition}
We now derive the asymptotic distribution of this estimator. To begin, note that:
\begin{equation}
\frac{\widehat{\rho}_n}{{\rho}_n} \overset{P}{\rightarrow}   1,
\end{equation}
and so we can use Slutsky's theorem to deal with the remaining random variation. 
\begin{theorem}
\label{ConfThm} Let $G(n,f^s)$, be generated via Definition~\ref{def-cm} sampled from $f^s(x,y)= \rho_n g(x) g(y)$ and let $0<\rho_n \leq 1$. As $n$ grows sufficiently large, then the distribution of $\widehat{\mathrm{H}}(f^s)$ is given as follows:
\begin{equation}
    \sqrt{\frac{n}{{\sigma^*}^2_{\widehat{\mathrm{H}}}}}\left\{\widehat{\mathrm{H}}(f^s)-{\mathrm{H}}(f^s)\right\} \overset{\mathcal{L}}{\to}  \mathrm{N}\left(0,1\right),
\end{equation}
where
${\sigma^*}^2_{\widehat{\mathrm{H}}}$ is given in Appendix B.
\end{theorem}
\begin{proof}
See Appendix B.
\end{proof}
The proof of Theorem~\ref{ConfThm} employs U-statistics, demonstrating their utility in addressing inference problems on random graphs. Unlike the case of a constant graphon, where observations can be treated as i.i.d. and the CLT applies directly, the dependence structure in graphs generated from a separable graphon necessitates alternative techniques. The theory of $U$-statistics provides a natural framework for addressing this dependence.  
\subsection{Estimating the Entropy of a Block--Constant Graphons}
The stochastic block model has served as a benchmark model for generating graphs with communities and community detection in graphs. We note that the block--constant graphon corresponds to the limit of a sequence of graphs generated by the stochastic block model.  
For notational simplicity, we first present the entropy of $W^b$. Since $f^b=\rho_n W^b$, its entropy follows directly, and we subsequently establish the convergence properties of $\hat{\mathrm{H}}(f^b)$.
\begin{definition}[Entropy of the block-constant graphon] 
Let $W^b$ denote the block-constant graphon as given in Definition~\ref{def-sbm}. Then, the entropy of the block-constant graphon is given by: 
\begin{equation}\label{ent-sbm}
    \mathrm{H}(W^b)=-\iint_{[0,1]^2}\Bigg\{\sum_{a,b} [\theta_{ab}\log(\theta_{ab})+(1-\theta_{ab})\log(1-\theta_{ab})]\mathds{1}(x \in I_a) \mathds{1}(y \in I_b) 
\,\mathrm{d}x\mathrm{d}y \Bigg\},
\end{equation}
where $I_a$ for $a \in [k]$ is a subinterval of the unit interval as detailed in Definition~\ref{def-sbm}.
\end{definition}

Simplifying Equation~\eqref{ent-sbm} by considering $k$ equal subintervals wlog, we obtain the following expression
\begin{equation}\label{entropy-sbm}
    \mathrm{H}(W^b)=-\sum_{a=1}^k \sum_{b=1}^k\frac{1}{k^2}\left\{\theta_{ab}\log(\theta_{ab})+(1-\theta_{ab})\log(1-\theta_{ab}) \right\}.
\end{equation} Appendix C.1 shows the derivation of~\eqref{entropy-sbm} from~\eqref{ent-sbm}. 

\begin{definition}[Entropy estimator of the block-constant graphon]\label{def:blockmodel-entropy-estimator}Let $W^b$ denote the block--constant graphon and $\hat{\theta}_{ab}$ denote a consistent estimator (least squares or maximum likelihood) of the connection probability between blocks $a$ and $b$ for $1\leq a,b \leq k$, where $k$ represents the number of blocks. Then, the entropy estimator of the block--constant graphon takes the following form:
\begin{equation}
\widehat{\mathrm{H}}(W^b)=-\sum_{a,b}\frac{1}{k^2}\left\{\hat{\theta}_{ab}\log(\hat{\theta}_{ab})+(1-\hat{\theta}_{ab})\log(1-\hat{\theta}_{ab}) \right\}.\end{equation}
    
\end{definition}

\begin{cor}\label{cor-entropy-sbm} Let $f^b=\rho_n W^b$, $\mathrm{H}({f^b})$ denote the entropy of $f^b$, and $\widehat{\mathrm{H}}(f^b)$ denote the graphon entropy estimator as given in Definition~\ref{def:blockmodel-entropy-estimator}. Then for any constant $C'>0$, there is a constant $C>0$ (where $C_1$ is a function of $C$), such that the following inequality
    \begin{align*}
    \small
    \left|\widehat{\mathrm{H}}(f^b)-\mathrm{H}(f^b)\right|
    &\leq C_1\left(\rho_n\frac{k^2}{n^2}+\rho_n\frac{\log k}{n}+\rho_n^2\sqrt{\frac{k}{n}}\right)^{1/2}+\mathcal{O}\left(\rho_n\frac{k^2}{n^2}+\rho_n\frac{\log k}{n}+\rho_n^2\sqrt{\frac{k}{n}}\right),
    \end{align*}
    holds with probability at least $1-\exp(-C'n)$.
\end{cor}
See Appendix C.2 for both the estimation method and the proof of Corollary~\ref{cor-entropy-sbm}.

\section{Simulation Study}\label{Sec:SimSt}
This section is organized into three subsections, each addressing key topics. We begin by evaluating the performance of the nonparametric (smooth) graphon entropy estimator under various conditions. Specifically, we analyze its behavior as the number of nodes increases, considering different sparsity regimes and various graphon models, using root mean squared error (RMSE) as the performance metric. Next, we show that the tailored graphon entropy estimator for separable graphons offers a slight performance improvement over the nonparametric estimator when the graphon is separable. However, for non-separable graphons, the nonparametric estimator yields superior results. Accordingly, we recommend using the nonparametric estimator for estimating the entropy of any exchangeable graph. Finally, we illustrate how the nonparametric graphon entropy estimator captures topological properties of graphs through examples of synthetically generated graphs, and compare its performance to traditional network entropy measures.
\subsection{Performance of the Graphon Entropy Estimator}
To examine the performance of the nonparametric graphon entropy estimator, we will analyze several scaled graphon examples that differ in nature. Namely, $f_1$, is a separable graphon, while $f_2$ is non-separable. Specifically, for $0<x, y<1$, constants $a_0, a_1, a_2$ lying on a simplex, and the regularized incomplete beta function $I_x(\alpha, \beta)$, where $\alpha_1, \alpha_2 \in(0,1]$, we consider the following graphons:
\begin{align}
\label{sim-f1}
f_1(x,y)&=\rho_n4xy,\\
\label{sim-f2}
f_2(x,y)&=\rho_n\left(a_0+4a_1 I_{x}^{-1}(\alpha_1,\alpha_1)I_{y}^{-1}(\alpha_1,\alpha_1)+4a_2 I_{1-x}^{-1}(\alpha_2,\alpha_2)I_{1-y}^{-1}(\alpha_2,\alpha_2)\right).
\end{align}
We restrict the choice for the shape parameters $\alpha_1, \alpha_2$ within the range $(0,1]^2$ to ensure Lipschitz continuity of the graphon $f_2$. By setting both $\alpha_1$ and $\alpha_2$ to 0.55, we introduce moderate variability in the degree distribution, allowing for a meaningful comparison with the separable graphon. This choice avoids the high variability that could result if the parameters were closer to the bounds of the range, allowing for more realistic modeling of interactions that reflect the properties of real-world networks. These selected values also prevent the graph from becoming too dense, a typical consequence of lower parameter values, thereby ensuring a trade-off that allows the graph to more accurately represent complex network interactions while maintaining reasonable density and variability.

For $\rho_n$ scaling with $n$, the graphon entropy does not converge to a single large sample limit. Therefore, we study the relative root mean square error (rRMSE) between $\widehat{\mathrm{H}}(f)$ and $\mathrm{H}(f)$, as this will also scale with $n$. We will define:
\[\mathrm{rRMSE}=\frac{\mathrm{RMSE}}{\mathrm{H}}=\frac{\sqrt{\frac{1}{N}\sum_{i=1}^N(\widehat{\mathrm{H}}_i -\mathrm{H})^2}}{\mathrm{H}}, \quad \mathrm{H}\neq 0,\]
where the $\widehat{\mathrm{H}}_i$ is computed for the $i$th graph realization out of $N>1$, and each $\widehat{\mathrm{H}}$ implicitly depends on $n$, unless $\rho_n$ is constant when scaling is not needed. This metric in a sense, enables us to say that $\widehat{\mathrm{H}}\approx {\mathrm{H}}$ unless the random variation around this value is too large. The simulation study confirms that after normalizing the RMSE with the true entropy at each $n$, the resulting rRMSE is decreasing with $n$, so the estimated entropy is getting \textit{close} to the true entropy.
Studying the behavior of the entropy estimator over a range of number of nodes, we plot the root mean squared error scaled by the true entropy, as this illustrates the large sample behavior of the graphon entropy estimator, see subfigures (b) and (d) of Figure~\ref{fig:RMSSE-entropy}. Normally, when estimating the graphon for the entire unit square, methods work well as long as $\rho_n=\Omega(\log(n)^{2+\epsilon}/n),$ for $\epsilon>0$. Here, we are not estimating a function as the primary quantity of interest, but a single scalar of the entropy, and so the last averaging over the $n^2$ symmetric edge variables exhibits a better convergence rate with reasonable performance even for very sparse graphs.

To see how graphon entropy estimator responds to sparsity, but while the sparsity parameter $\rho_n$ does not change with $n$, we consider various constant density regimes, decreasing in order, and for this we use the usual root mean square error, since now the true entropy remains unchanged, and we consider:
\begin{equation*}
    \mathrm{RMSE}=\sqrt{\frac{1}{N}\sum_{i=1}^N(\mathrm{H}_i -\mathrm{H})^2}.
\end{equation*}
We observe from subfigures (a) and (c) of Figure~\ref{fig:RMSSE-entropy} that as the constant $\rho_n$ decreases, the RMSE exhibits a similar downward trend, indicating improved estimates with increasing node size.

\begin{figure}[ht!]
    \centering
    \begin{minipage}{0.48\linewidth}
        \centering
        \includegraphics[width=\linewidth]{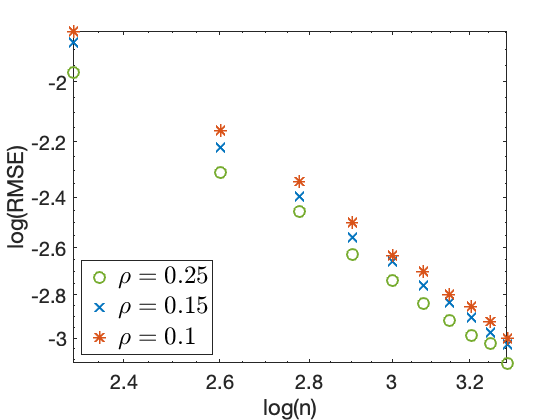}
        \caption*{(a)}
    \end{minipage}
    \hfill 
    \begin{minipage}{0.48\linewidth}
        \centering
        \includegraphics[width=\linewidth]{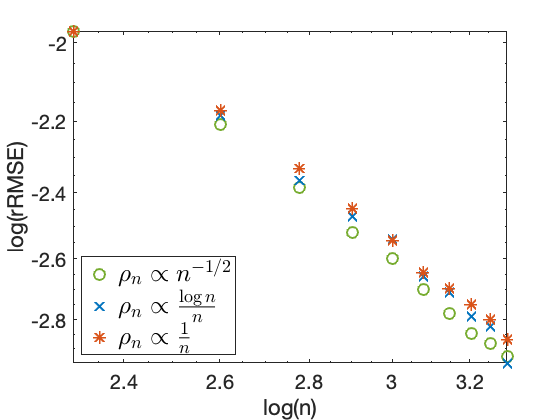}
        \caption*{(b)}
    \end{minipage}
       \begin{minipage}{0.48\linewidth}
        \centering
        \includegraphics[width=\linewidth]{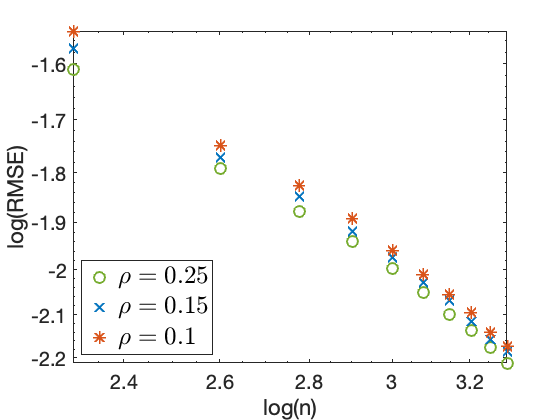}
        \caption*{(c)}
    \end{minipage}
    \hfill 
    \begin{minipage}{0.48\linewidth}
        \centering
        \includegraphics[width=\linewidth]{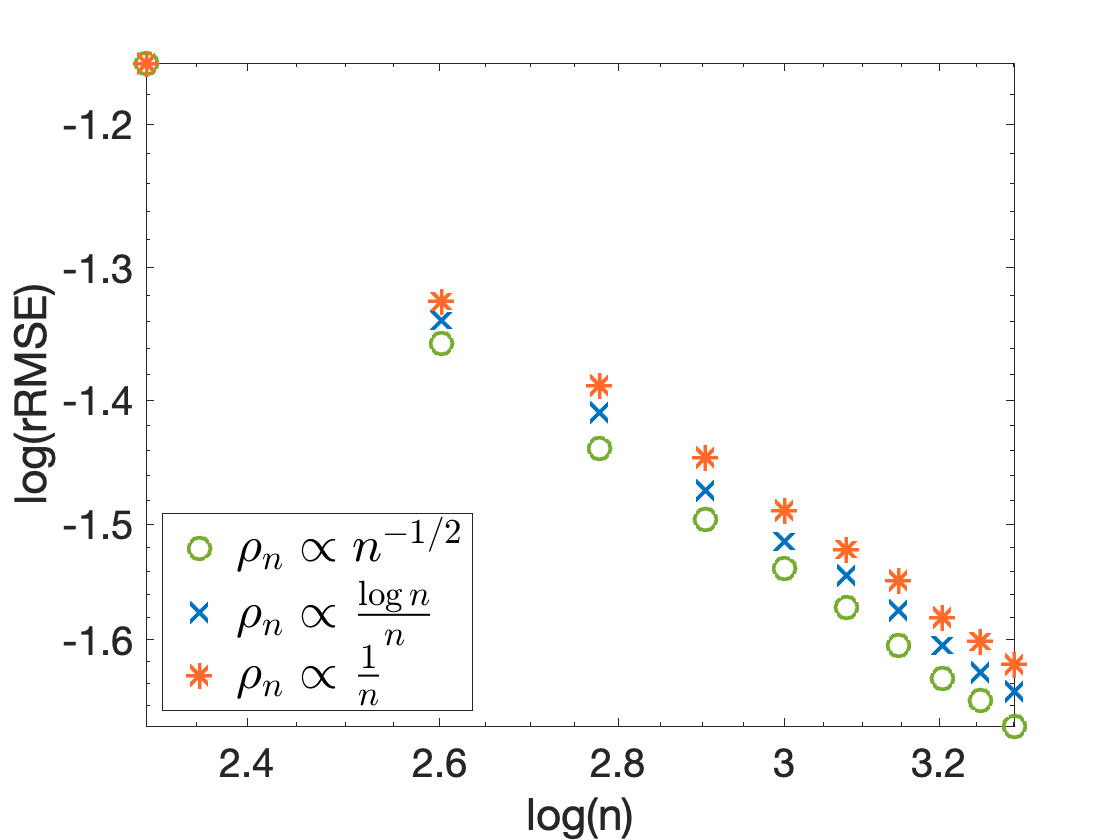}
        \caption*{(d)}
    \end{minipage}
 \caption{Log-log plots of RMSE and rRMSE decay for the graphon entropy estimator across node sizes from 200 to 2000, averaged over 100 trials. Subfigures: (a) RMSE of the estimator of $\mathrm{H}(f_1)$ over various constant $\rho$ for $f_1$ as given in~\eqref{sim-f1}, (b) rRMSE of the estimator of $\mathrm{H}(f_1)$ over various sparsity regimes $\rho_n$, (c) RMSE of the estimator of $\mathrm{H}(f_2)$ over various constant $\rho$, for $f_2$ as given in~\eqref{sim-f2} (d) rRMSE of the estimator of $\mathrm{H}(f_2)$ over various sparsity regimes $\rho_n$.}\label{fig:RMSSE-entropy}
\end{figure}

\subsection{Comparison of the Introduced Graphon Entropy Estimators}

We now refer to Table~\ref{table1} to compare the performance of the proposed graphon entropy estimators in simulated networks, where `$\widehat{\mathrm{H}}$' stands for graphon entropy estimator.
For the separable graphon ($f_1$) as given in Equation~\eqref{sim-f1}, the separable graphon entropy estimator performs best (as expected), as the nonparametric (smooth) graphon entropy estimator allows for a more flexible approximation, but exhibits finite sample size bias. The variances are comparable between the separable graphon entropy estimator and the smooth graphon entropy estimator, but the bias thus beats out the more general smooth graphon entropy estimator. The constant graphon entropy estimator performs best on variance but is, of course, (in general) biased.
\begin{table}[!htb]
\centering
\begin{tabular}{@{}llll@{}}
\toprule
 & Constant $\widehat{\mathrm{H}}$       & Separable $\widehat{\mathrm{H}}$ & Nonparametric $\widehat{\mathrm{H}}$  \\ \midrule
MSE         & $1.85\times 10^{-2}$ & $1.98\times 10^{-4}$ & $2.31\times 10^{-4}$ \\
            & $(3.75\times 10^{-4})$ & $(2.88\times 10^{-5})$ & $(3.07\times 10^{-5})$ \\
Bias Squared & $1.83\times 10^{-2}$ & $6.91\times 10^{-6}$ & $6.25\times 10^{-5}$ \\
           
Variance     & $1.97\times 10^{-4}$ & $1.93\times 10^{-4}$ & $1.7\times 10^{-4}$ \\
           
\bottomrule
\end{tabular}
\caption{Comparison of the proposed graphon entropy estimators for the separable graphon $f_1$ (as defined in~\eqref{sim-f1}), with $n=512$ and $\rho_n = 0.25$, over 100 repetitions. Standard errors are shown in parentheses.}\label{table1} 
\end{table}

\begin{table}[!htb]
\centering
\begin{tabular}{@{}llll@{}}
\toprule
             & Constant $\widehat{\mathrm{H}}$          & Separable $\widehat{\mathrm{H}}$ & Nonparametric $\widehat{\mathrm{H}}$      \\ \midrule
MSE          & $1.07 \times 10^{-3}$             & $1.04 \times 10^{-3}$     & $1.06\times 10^{-5}$ \\
             & $(1.99 \times 10^{-6})$           & $(2.33 \times 10^{-6})$   & $(1.18 \times 10^{-6})$ \\
Bias Squared & $1.07\times 10^{-3}$              & $1.04 \times 10^{-3}$     & $5.09 \times 10^{-6}$ \\
             
Variance     & $9.24 \times 10^{-8}$             & $1.3 \times 10^{-7}$      & $5.6 \times 10^{-6}$ \\
             
\bottomrule
\end{tabular}
\caption{Comparison of the proposed graphon entropy estimators for the non-separable graphon $f_2$ (as defined in~\eqref{sim-f2}), with parameters $a_0=0.8$, $a_1=0.5$, $n=512$, and $\rho_n = 0.25$, over 100 repetitions. Standard errors are shown in parentheses.}\label{table2}
\end{table}

On the other hand, $f_2$ as given in Equation~\eqref{sim-f2}, is not a separable graphon, so the graphon entropy estimator based on assuming separability of the graphon is no longer the best-performing method. The nonparametric graphon entropy estimator, in this case, has a smaller bias and a somewhat larger variance, but is an improvement on the other methods, leaving the general graphon entropy estimator applicable to any exchangeable graph.

\subsection{Graphon Entropy Estimator and Graph Structure}
\subsubsection{Graphon Entropy Estimator and Degree Distribution}
We investigate the changes in the nonparametric graphon entropy estimator associated with different degree structures while maintaining the same sparsity parameter, $\rho_n$. We also compare our results with existing network entropy measures. To do so, we use a separable $f_1$ and constant graphon $f_2$ to resemble the Chung-Lu and the Erd\H{o}s-R\'enyi models, respectively, given as below:
$$f_1(x,y)=\rho_n4xy,$$ is separable, thereby degree-based, and the second graphon 
$$f_2=p,$$ for $p=0.252$ is constant and chosen to maintain the same edge density as the first graph.

The graph generated by the first graphon exhibits high variability in its degree structure, reflecting degree heterogeneity. In contrast, the graph from the second graphon, which resembles an Erdős–Rényi (ER) graph, exhibits a more uniform degree distribution. While both graphs have the same expected edge density, the structured nature of the first graph constrains edge placement, leading to lower entropy. In contrast, the second graph, which lacks such structural constraints, exhibits higher entropy, aligning with its more random nature. The estimated entropy of the first graphon equals $\widehat{\mathrm{H}}(f_1)=0.4296$ with a $95\%$ confidence interval of $[0.4172,0.4484]$, averaged over $100$ runs. On the other hand, the estimated graphon entropy of the second one equals $\widehat{\mathrm{H}}(f_2)=0.5654$ with a $95\%$ confidence interval of $[0.5432,0.5897]$, averaged over $100$ trials.

These results demonstrate that the graphon entropy estimator captures information related to the degree distribution. This is illustrated by keeping the scaling or sparsity parameter, $\rho_n$, constant across both graphs.
\subsection*{Graphon Entropy Estimator Compared to Traditional Network Entropies}
We compare the nonparametric graphon entropy estimator with traditional network entropy measures, namely the degree sequence Shannon entropy, the random walker Shannon entropy~\cite{small2013complex}, the Random Walker Kolmogorov Sinai entropy (KS entropy)~\cite{demetrius2005robustness}, and the von Neumann entropy~\cite{braunstein2006laplacian} presenting the results in Table~\ref{tab:entropy_comparison}. As expected, the degree sequence entropy is higher for the degree-based graph than for the Erdős–Rényi (ER) graph, and a similar trend is observed for the KS entropy. In contrast, the random-walker entropy is slightly higher for the ER graph, though the difference is less pronounced than that observed with the graphon entropy estimator. Von Neumann entropy remains high for both models, showing limited sensitivity to structural differences and primarily reflecting overall connectivity. These results illustrate that while traditional entropy measures are feature-dependent, each captures a different structural aspect of the graph. The graphon entropy estimator, however, provides a broader perspective by quantifying the randomness or disorder inherent in the graph-generating mechanism, effectively distinguishing an Erdős–Rényi (ER) graph from a Chung–Lu graph.

\begin{figure}[ht!]
    \centering
    \begin{minipage}{0.48\linewidth}
        \centering
        \includegraphics[width=\linewidth]{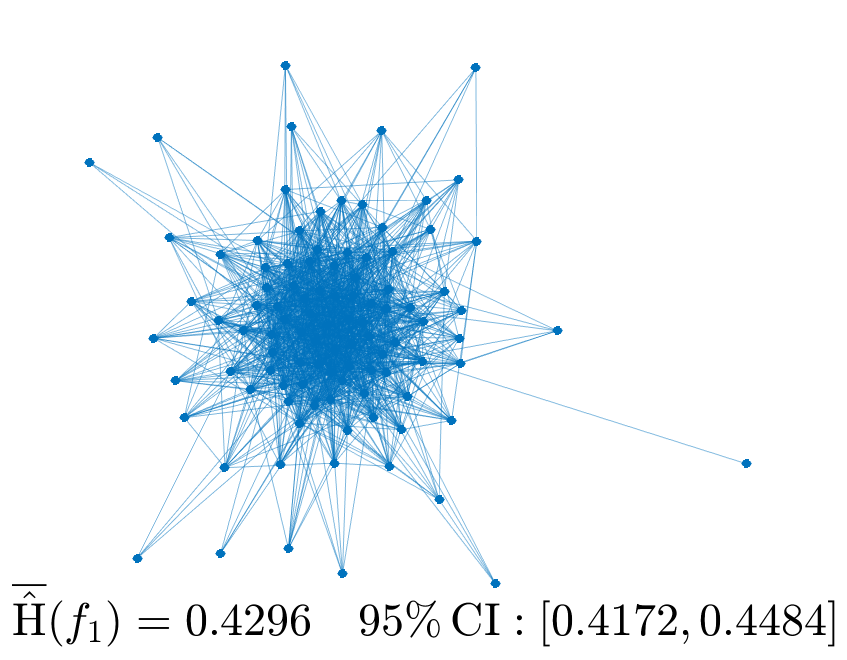}
        \label{fig:tc-sbm-percolated}
    \end{minipage}
    \hfill 
    \begin{minipage}{0.48\linewidth}
        \centering
        \includegraphics[width=\linewidth]{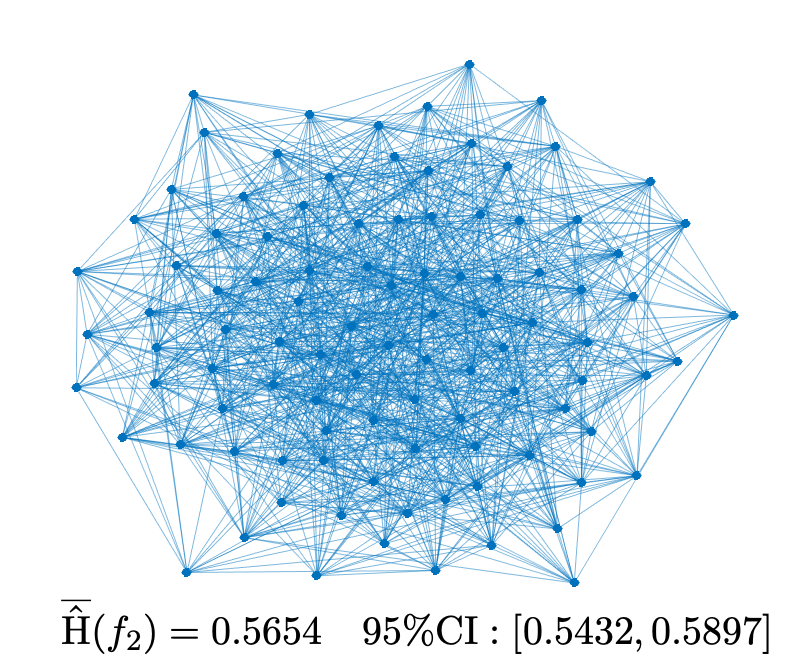}
        \label{fig:tc-sbm-different-thetas}
    \end{minipage}
    \caption{Graphs generated on $n=500$ nodes with $
\rho_n=0.25$ and respective graphons, $f_1$ as given in~\eqref{sim-f1} and $f_2$ given in~\eqref{sim-f2}.}
\end{figure}

\begin{table}[h]
    \centering
    \begin{tabular}{l c c}
        \toprule
        Entropy Measure & Chung–Lu $(\rho_n=0.25)$ & {Erd\H{o}s–Rényi} $(p=0.252)$ \\ 
        \midrule
        Graphon entropy estimator & 0.4296 & 0.5654 \\ 
        Degree sequence entropy & 0.8448 & 0.5822 \\ 
        Random-Walker entropy & 0.7342 & 0.7796 \\ 
        KS entropy & 0.8276 & 0.7809 \\ 
        VN entropy & 0.9689 & 0.9988 \\ 
        \bottomrule
    \end{tabular}
    \caption{Comparison of the graphon entropy estimator with network entropy measures for separable and constant graphons.}
    \label{tab:entropy_comparison}
\end{table}

\subsubsection{Graphon Entropy Estimator and Community Structure}
We investigate whether the nonparametric graphon entropy estimator effectively captures information about community structure, an important feature of graphs. Figure~\ref{SBM-entropy} illustrates a series of graphs with the same set of nodes and identical sparsity regimes but varying community structures. The first graph exhibits an emphasized community structure, while subsequent graphs progressively lose this clarity, with the final graph resembling an Erd\H{o}s–R\'enyi (ER) graph. We note that the graphon entropy estimator measures the level of randomness or uniformity, which could be phrased as structural disorganization rather than organization. The estimated average graphon entropy for each graph, shown below their respective images in Figure~\ref{SBM-entropy}, is averaged over 100 runs. We observe that graphs with pronounced community structures have lower entropy compared to those with more diffuse structures. This finding aligns with the concept of entropy, which associates higher entropy with greater uniformity.
\begin{figure}[ht!]
    \centering
    \begin{minipage}{\textwidth}
        \centering
        \begin{minipage}{0.31\linewidth}
            \centering
            \includegraphics[width=\linewidth]{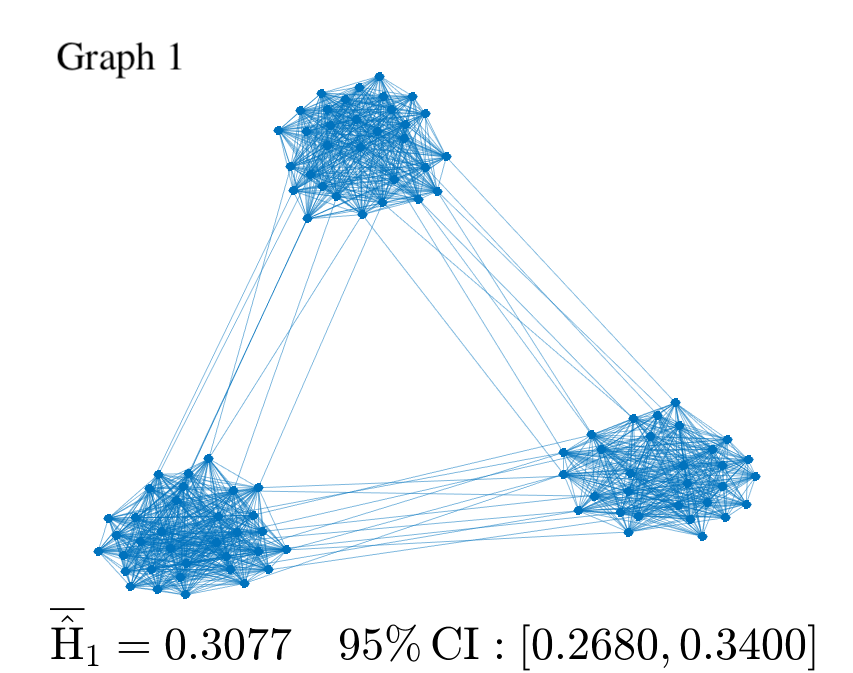}
        \end{minipage}
        \hfill
        \begin{minipage}{0.31\linewidth}
            \centering
            \includegraphics[width=\linewidth]{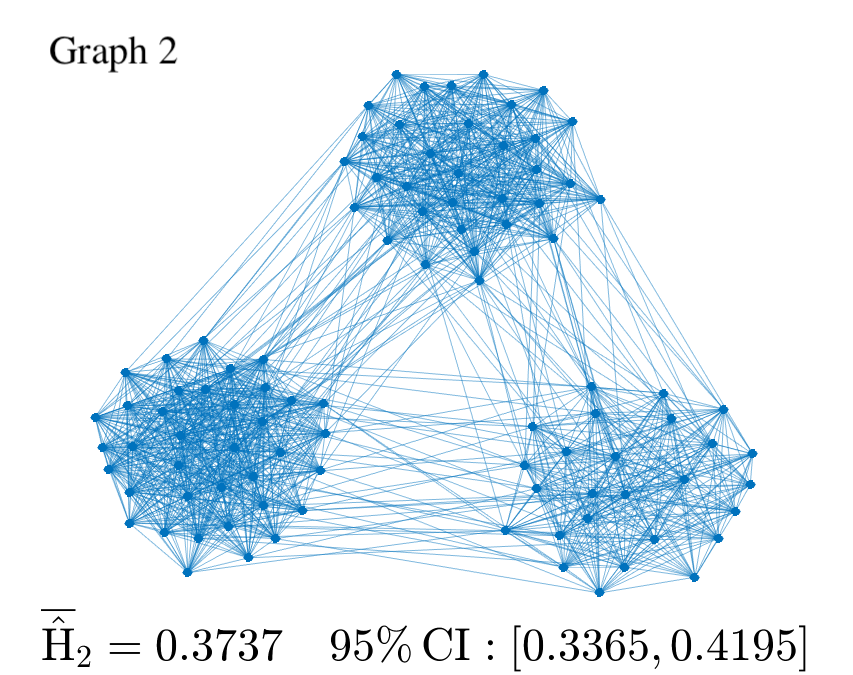}
        \end{minipage}
        \hfill
        \begin{minipage}{0.31\linewidth}
            \centering
            \includegraphics[width=\linewidth]{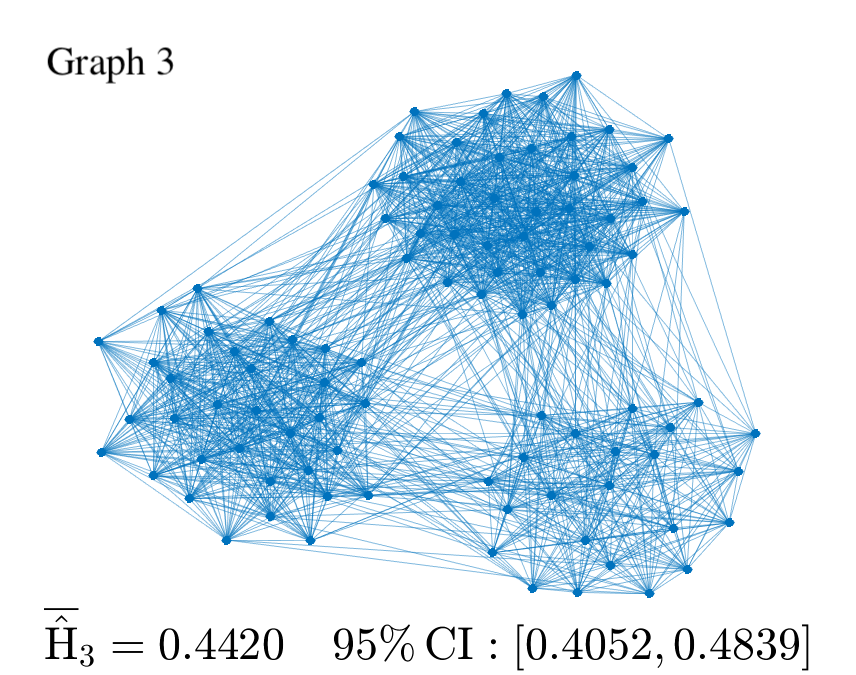}
        \end{minipage}
        
        \vspace{0.5cm}

        \begin{minipage}{0.31\linewidth}
            \centering
            \includegraphics[width=\linewidth]{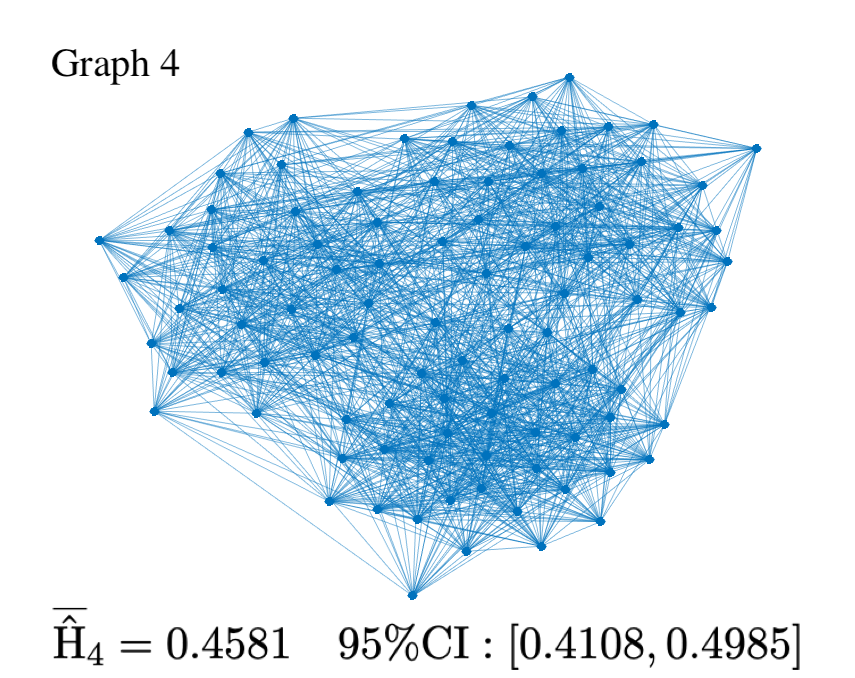}
        \end{minipage}
        \hfill
        \begin{minipage}{0.31\linewidth}
            \centering
            \includegraphics[width=\linewidth]{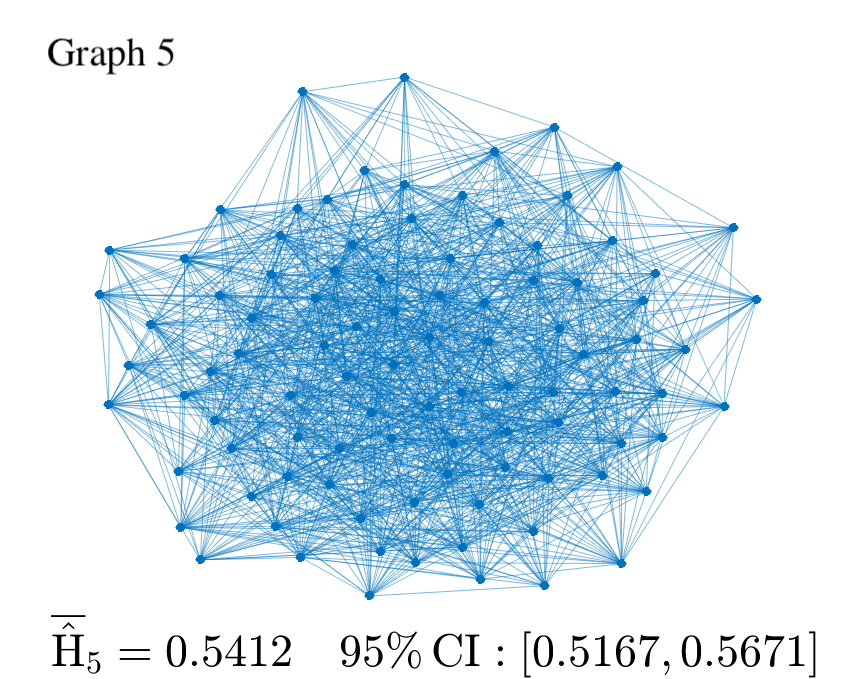}
        \end{minipage}
        \hfill
        \begin{minipage}{0.31\linewidth}
            \centering
            \includegraphics[width=\linewidth]{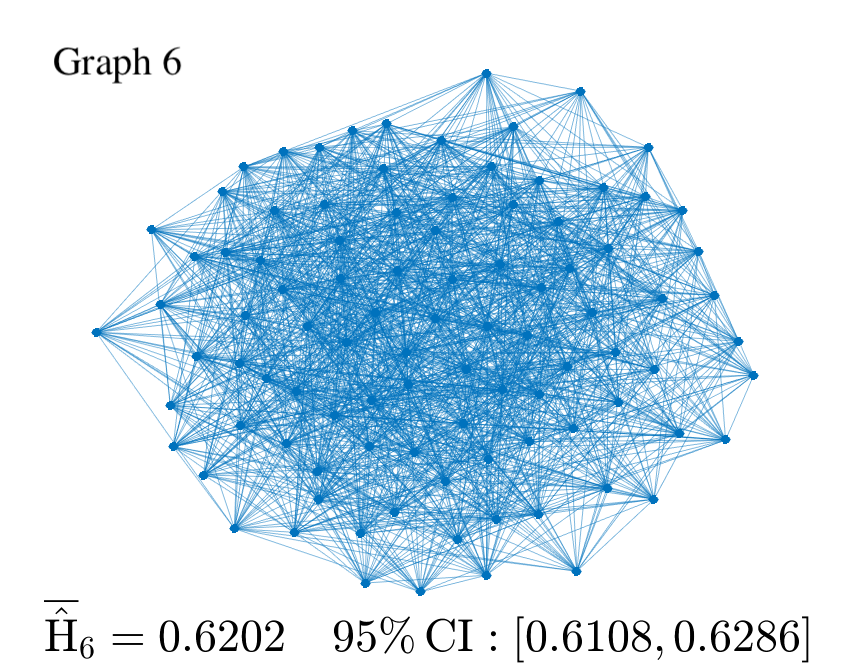}
        \end{minipage}

        \caption{Graphs generated on $n=96$ nodes with $\rho_n=0.5$ and varying community structures.}
        \label{SBM-entropy}
    \end{minipage}

    \vspace{1cm} 
\end{figure}

\subsection*{Graphon Entropy Estimator Compared to Traditional Network Shannon Entropies}

Figure~\ref{fig:entropy-comparison} presents a comparative analysis of entropy measures across graphs with varying community structures depiced in Figure~\ref{SBM-entropy}. Kolmogorov-Sinai (KS) entropy exhibits a slight downward trend, with the highest entropy for the most structured graph and the lowest for the Erdős–Rényi (ER) graph. The random walker entropy remains nearly constant across all graphs. The degree-sequence entropy also remains nearly constant for all graph realizations. We have not plotted the von Neumann entropy to avoid distorting the scale of the axes, as it remains consistently high at 0.99, with only slight variations between 0.9903 and 0.9908. In contrast, the graphon entropy estimator exhibits a gradual increase, with entropy values rising as community structure becomes less distinct and reaches the highest value for the ER graph. This trend, observed in Figure~\ref{fig:entropy-comparison}, underscores the ability of the graphon entropy estimator to quantify randomness and uniformity, distinguishing structured graphs from those with diffuse connectivity.
\begin{figure}[!htb]
    \begin{minipage}{\textwidth}
        \centering
        \includegraphics[width=0.7\textwidth]{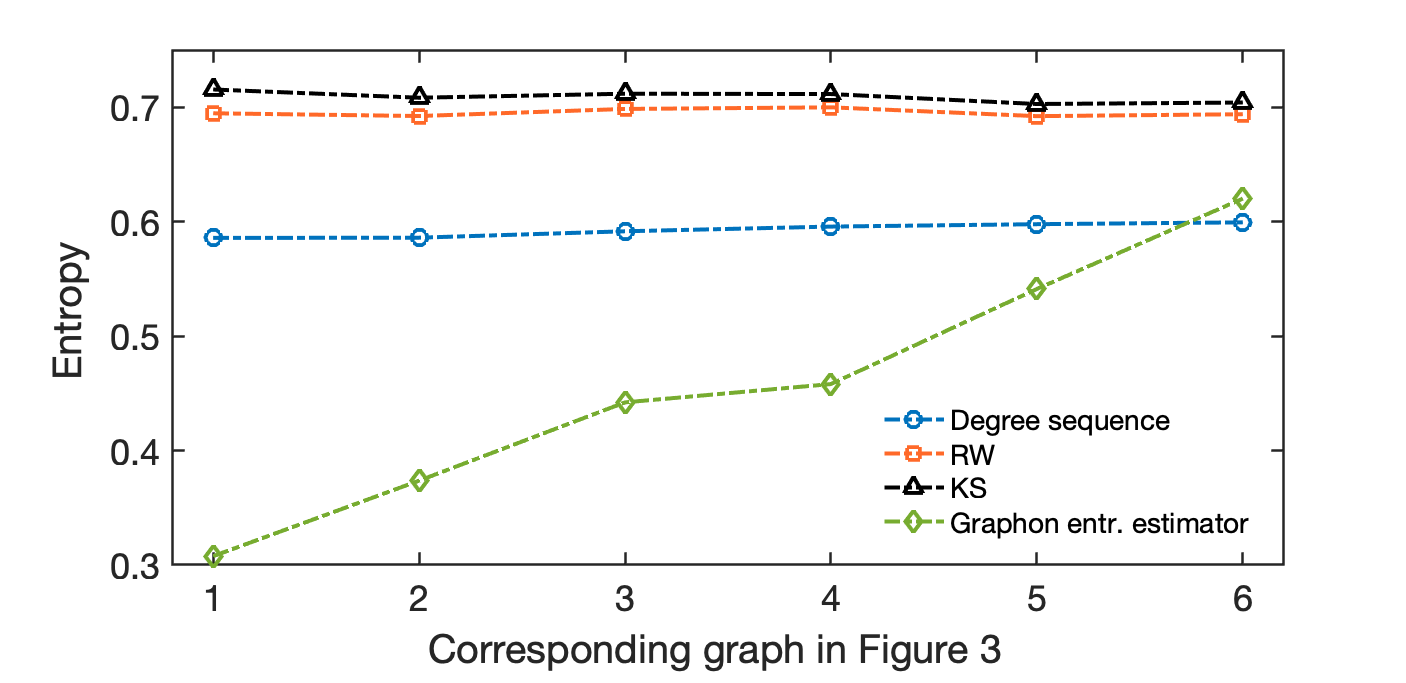}
       \caption{Comparison of the nonparametric graphon entropy estimator with some network entropy measures across graphs with varying community structures  as given in Figure~\ref{SBM-entropy}. }
        \label{fig:entropy-comparison}
    \end{minipage}
\end{figure}

\section{Real Data Example}\label{Sec:RealData}

We analyze a collection of graphs representing the behavior of corporate boards in Norway, with the data set and subfigure (a) of Figure~\ref{fig:boards-data} publicly available online (\url{https://graphs.skewed.de/net/board_directors}). This data set records the connections formed by individuals who served on the same boards of Norwegian public limited companies and was previously studied by~\cite{seierstad2011few}. The data were compiled from the Norwegian Business Register as of 5 August 2009 and were used to examine the impact of legislation on gender representation. The law, enacted in 2008, mandated at least $40\%$ representation of each gender on the boards of public limited companies~\cite{seierstad2011few}. According to~\cite{seierstad2011few}, the legislation successfully increased women's participation on boards, broadening the pool of qualified female directors.
\begin{figure}[ht!]
    \centering
    \begin{minipage}{0.48\linewidth}
        \centering
        \includegraphics[width=\linewidth]{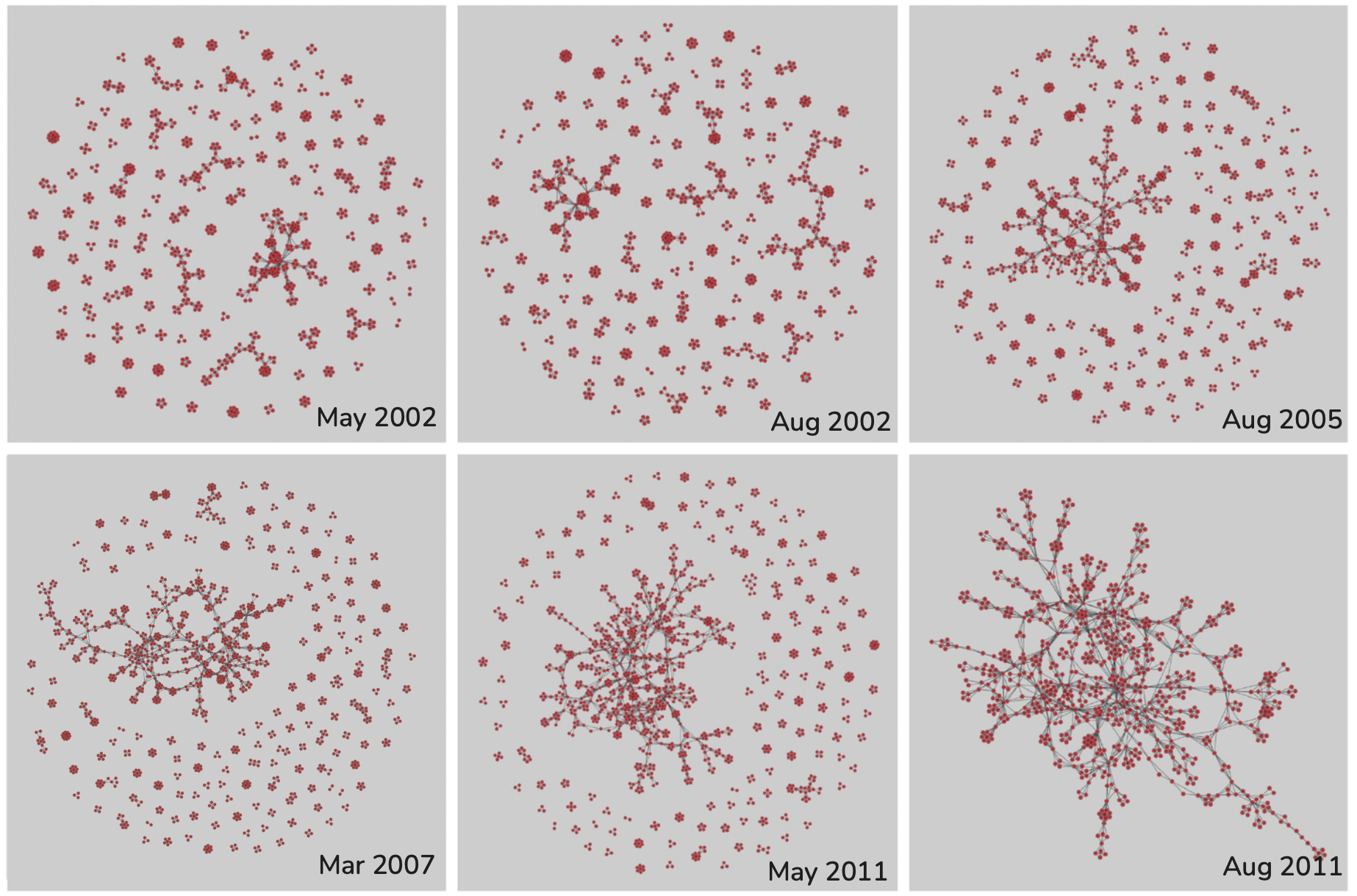}
        \caption*{(a)}
    \end{minipage}
    \hfill 
    \begin{minipage}{0.48\linewidth}
        \centering
        \includegraphics[width=1.15\linewidth]{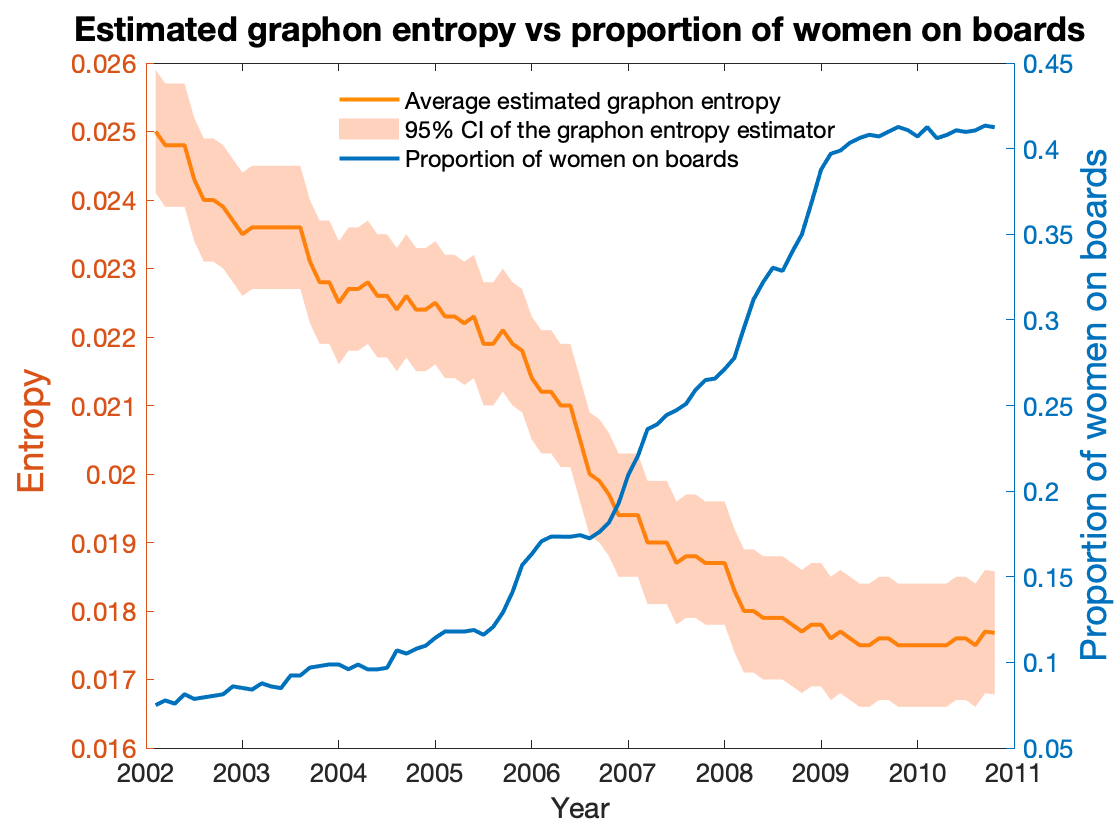}
        \caption*{(b)}
    \end{minipage}
    \caption{Subfigure (a) depicts selected networks of boards of directors in Norway’s public companies to illustrate the evolving network dynamics, and subfigure (b) shows the average proportion of women directors (blue--right) given by~\cite{seierstad2011few} and average graphon entropy change over time (orange--left).}\label{fig:boards-data}
    \end{figure}

We estimate the graphon entropy of each network observation, as detailed in Section~\ref{sec-estimation-arbitrary}, to examine the insights provided by entropy changes over time, given the temporal nature of the data set. A key challenge in network inference is quantifying uncertainty from a single observed network. We address this by using the bootstrapping method described in~\cite{green2022bootstrapping}. First, we estimate the graphon from the observed graph. Then, we generate new latent variables $\xi_i^*\sim U(0,1)$ for $i \in [n]$ at each step and generate new graphs from using the estimated graphon and the new latent variables $\xi^*_i$. Through this process, we generate a thousand simulated graphs, each representing a possible network realization. We estimate the graphon entropy for each of the realizations, generating a distribution from bootstrap samples. From this distribution, we calculate the average estimated graphon entropy and construct the $95\%$ confidence intervals to quantify the uncertainty of our estimates. These intervals (left: shaded light orange) are plotted alongside the average (left: orange curve). 

The average proportion of women on boards is shown with the blue curve (right) in subfigure (b) of Figure~\ref{fig:boards-data}. This ratio demonstrates a substantial increase in the representation of women on boards during the law enforcement period. The time from the announcement of the law's enforcement until it took effect, or between 2005 and 2008, had the biggest growth in particular. However, similarly, the orange curve (left) in Subfigure (b) of Figure~\ref{fig:boards-data} indicates that the highest decrease in average estimated entropy took place from mid-2005 to December 2008. The aligning change of the average estimated graphon entropy with the proportion of women on boards and consequently the changes occuring in the networks as given in subfigure (a) of Figure~\ref{fig:boards-data} shows that the nonparametric graphon entropy estimator is promising in capturing changes occurring in this graph collection. The reason for the decrease in entropy in parallel with the increase in the average proportion of women on boards could be attributed to the findings by~\cite{seierstad2011few}, who state that women have more social capital than men and that the average betweenness score of women is $70\%$ higher than that of men. Following this line of reasoning, the female director boards in the dataset tend to form connected components, which contribute to increasing the overall connectivity and structural coherence of the graph. As a result, the network becomes more predictable, leading to a reduction in graphon entropy. See Figure \ref{SBM-entropy} for an illustration of the relationship between graphon entropy and the evolving graph structure. This phenomenon is evident in subfigure (a) of Figure~\ref{fig:boards-data}, where we present selected snapshots of the graph's evolution, with the first and last graphs corresponding to the first and last time points, respectively. The rest are selectively chosen from the collection to illustrate the evolving dynamics. Refinement of a graph structure to achieve a connected graph is inherently associated with changes in edge density. We observe variability in the average degree over different time intervals, which aligns with our results. These snapshots begin with the first graph in the dataset and proceed through selected stages up to the final graph. The graphon entropy estimator, which takes into account factors like sparsity, degree distribution, and community structure of real-world graphs, adjusts accordingly. This adjustment is crucial as, unlike in simulation studies, the density parameter is not fixed across the graphs under study. 
The decline in entropy aligns with the increasing participation of women on corporate boards, leading to a more structured and predictable network. This supports findings that female directors play a central role in board connectivity, reducing randomness in network formation.

\section{Conclusion}\label{Sec:Disc} 
In conclusion, we introduced a suite of graphon entropy estimators as complexity measures for large graphs, effectively characterizing the graph-generating mechanism while remaining independent of graph representation. Building on a smoothness assumption, we used a block model approximation to achieve a tractable and consistent nonparametric graphon entropy estimator, applicable to any exchangeable graph. We further developed tailored estimators under additional assumptions, including constant, separable, and block-constant graphons, and established  Central Limit Theorems or convergence rates for each case. Simulations validated these findings, showing reduced error with increasing number of nodes, even in very sparse regimes. We also explored graphon entropy estimator's relationship with sparsity, degree distribution, and community structure, demonstrating its ability to capture meaningful information related to graphs. We further compared the nonparametric graphon entropy estimator with traditional network entropy measures. In real-world applications, the nonparametric graphon entropy estimator proved effective in analyzing time-varying graphs, aligning with prior observations~\cite{seierstad2011few}. These results underscore the estimator’s potential for addressing complex network inference problems and serving as a backbone for developing multivariate information-theoretic measures based on graphons.

\newpage
\bibliographystyle{acm}

\newpage
\section*{A.1 $\quad$ Proof of Theorem 2}\label{app-a.1}
\begin{proof}~We note that the Kullback-Liebler divergence between two Bernoulli distributions $p$ and $p+\delta$, where $0<p<1$ and $-p\leq \delta \leq 1-p$ is given as:
\begin{align}\label{KL-div}
D(p+\delta\|p)=(p+\delta) \log \left(\frac{p+\delta}{p}\right)+(1-(p+\delta)) \log \left(\frac{1-(p+\delta)}{1-p}\right).
\end{align}
We then bound the following expression:
\begin{align}
\nonumber \left|\widehat{\mathrm{H}}(f) - \mathrm{H}(f)\right| &= \left|\iint_{[0,1]^2} (f - \hat{f})\log(f) + [(1 - f) - (1 - \hat{f})]\log(1 - f) \, dx \, dy - D(\hat{f} \parallel f) \right| \\
\nonumber &\leq \left( \iint_{[0,1]^2} (f - \hat{f})^2 \, dx \, dy \right)^{1/2} \\
\nonumber &\quad \cdot \left\{ \left( \iint_{[0,1]^2} (\log(f))^2 \, dx \, dy \right)^{1/2} + \left( \iint_{[0,1]^2} (\log(1 - f))^2 \, dx \, dy \right)^{1/2} \right\} \\
\nonumber &\quad + |D(\hat{f} \parallel f)|\\
&\label{eq-bound-entropy}\leq C_1(\delta^2(\hat{f},f))^{1/2}+C_2\delta^2(\hat{f},f),
\end{align}
for some constants $C_1,C_2>0$. Note that following Lemma C.9 of~\cite{wolfe2013nonparametric}, we obtain:
\begin{equation*}
  D(\hat{f}||f)\leq C_{\mathrm{min}}\iint_{[0,1]^2} (\hat{f}-f)^2 \ \mathrm{d}x\mathrm{d}y.  
\end{equation*}
The dominating term in bounding the error between the graphon entropy estimator and the true graphon is the first term. We now refer to Corollary 3.6 in~\cite{klopp2017oracle} for the optimal rate of graphon estimation and employ Equation~\eqref{eq-bound-entropy} to obtain the following:
\begin{align*}
\small
    &\left|\widehat{\mathrm{H}}(f)-\mathrm{H}(f)\right| \\
    &\leq C\left(\rho_n^{-2\alpha/(\alpha+1)}n^{-2\alpha/(\alpha+1)}+\frac{\rho_n \log n}{n}+\frac{\rho_n^2}{n^{\alpha}}\right)^{1/2}+\mathcal{O}\left(\rho_n^{-2\alpha/(\alpha+1)}n^{-2\alpha/(\alpha+1)}+\frac{\rho_n \log n}{n}+\frac{\rho_n^2}{n^{\alpha}}\right),
\end{align*}
 with probability at least $1-\exp(-C'n)$, uniformly over $f \in \mathcal{F}_{\alpha}(M)$, where $C'>0$ is any constant and $C>0$ depend only on $M$ and $C'$.

\end{proof}

\section*{A.2 $\quad$ Proof of Proposition 1}\label{app-a.3}
Assume we generate realizations from~\eqref{sparse-graphon}. We shall consider the oracle estimator of Eqn. (3.2) based on the oracle graphon estimator with a fixed bandwidth $h$ of (3.2) for $f(x,y)$. For a given $h$ we can calculate the mean and variance of Eqn.(3.3), to determine how to balance the two. We first note that from~\cite{Olhede_2014}
\begin{align}
\nonumber
\mathbb{E} \{\bar{A}^{\ast}_{ab}(\tilde{z})\}&=\rho_n \bar{f}_{a,b}(1+o(1))\\
\left|\mathrm{Var}\{ \bar{A}^{\ast}_{ab}(\tilde{z})\}-\frac{\rho_n \bar{f}_{a,b}-\rho_n^2\overline{f^2}_{ab}}{h^2_{ab}}\right|&<\frac{\rho_n M_f}{h^2_{ab}(2n)^{1/2}}+
\frac{\rho_n^2 M^2_f}{2n}.
\end{align}
We can deduce from these two moments that as long as $\rho_n h_{ab}^2=\omega(1)$
\begin{align}
    \bar{A}^{\ast}_{ab}(\tilde{z})/\rho_n\overset{P}{\rightarrow}
\bar{f}_{ab}.
\end{align}
Secondly we note the following from Eqn. (3.2):
\begin{align}
\widehat{\mathrm{H}}^*(f)&=-\sum_{a,b}\frac{h_a h_b}{n^2}\left\{\bar{A}_{ab}^{\ast}(\widetilde{z})\log(\bar{A}_{ab}^{\ast}(\widetilde{z}))+(1-\bar{A}_{ab}^{\ast}(\widetilde{z}))\log(1-\bar{A}_{ab}^{\ast}(\widetilde{z})) \right\},
\end{align}
where $h_a$ is the number of nodes in group $a$ such that $\sum_{a=1}^k h_a=n$, or otherwise $h_a/n$ represents the length of one of the $k$ subintervals of $[0,1]$.
To understand the mean and variance of $\widehat{\mathrm{H}}^*(f)$ we shall Taylor--expand the expression in the variables $\bar{A}_{ab}^{\ast}(\tilde{z})$, and we find that (assuming uniform integrability of $\bar{A}$):
\begin{align}
  \left| \mathbb{E}\left\{  \widehat{\mathrm{H}}^*(f)\right\}-{\cal H}( f)\right|&\leq \left(\frac{1}{\rho_n \min_{a,b}\bar{f}_{ab}}+\frac{1}{1-\rho_n \max_{a,b}\bar{f}_{ab}}\right) M_f^2 \left(\frac{h}{n}\right)^2.
\end{align}
This in essence determines the bias of the entropy estimation, which is monotonically decreasing with $h$ decreasing. The expression also explicitly shows how difficult estimation near the boundaries of the parameter values is.

We can also calculate the variance of the oracle estimator $\widehat{\mathrm{H}}^*(f)$. We find that this variance takes the form after linearization via delta method
\begin{align}
\nonumber
\mathrm{Var}\{\widehat{\mathrm{H}}^*(f) \}&=\mathrm{Var}\{\sum_{a,b}\frac{h_a h_b}{n^2}\left\{\bar{A}_{ab}^{\ast}(\widetilde{z})\log(\bar{A}_{ab}^{\ast}(\widetilde{z}))+(1-\bar{A}_{ab}^{\ast}(\widetilde{z}))\log(1-\bar{A}_{ab}^{\ast}(\widetilde{z})) \right\} \}\\
&=\mathrm{Var}\{\sum_{a,b}\frac{h_a h_b}{n^2}\{2+\log(\rho_n \bar{f}_{ab}) +\log(1-\rho_n \bar{f}_{ab})\}\bar{A}^{\ast}_{ab}\}(1+o(1)).
\end{align}
As most quantities are positive in this sum, we may note
\begin{align}
\nonumber
\mathrm{Var}\{\widehat{\mathrm{H}}^*(f) \}&\leq 
\{2+|\log(\rho_n \|f\|_{\min})| +|\log(1-\rho_n \|f\|_0)|\}^2\mathrm{Var}\{\sum_{a,b}\frac{h_a h_b}{n^2}\bar{A}^{\ast}_{ab}\}(1+o(1))\\
\nonumber
&\leq 
\{2+|\log(\rho_n \|f\|_{\min})| +|\log(1-\rho_n \|f\|_0)|\}^2\frac{1}{n^4}\mathrm{Var}\{2\sum_{i<j}A_{ij}\}(1+o(1))\\
\nonumber
&\leq 
\{2+|\log(\rho_n \|f\|_{\min})| +|\log(1-\rho_n \|f\|_0)|\}^2\frac{1}{n^4} 4 \|f\|_0\binom{n}{2}\\
&\leq 2\{2+|\log(\rho_n \|f\|_{\min})| +|\log(1-\rho_n \|f\|_0)|\}^2\frac{1}{n^2}  \|f\|_0.
\end{align}
As this ends up summing up over all edge variables, it is not a function of $h$.  The intuition behind our understanding of the decay in $n$ would instead be based on summing over the entire set of edge variables, as the Taylor expansion would be valid for any estimator such that it converges in probability to a constant, similar convergence rates could be derived for any consistent estimator of $\bar{f}_{ab}.$ This shows that $h$ does not impact the variance of $\widehat{H}^*(G)$.

\section*{B $\quad$ Proof of Theorem 4}\label{app-b}
\begin{proof}

The estimator for the entropy of the Chung--Lu Model is given as follows:
	\begin{align*}
	    \widehat{\mathrm{H}}(f^s)&=\frac{1}{{n \choose 2}}\sum_{i<j} -\{[{\rho}_n \hat{g}(\xi_i)\hat{g}(\xi_j)\log({\rho}_n \hat{g}(\xi_i)\hat{g}(\xi_j)]+[(1-{\rho}_n \hat{g}(\xi_i)\hat{g}(\xi_j))\log(1-{\rho}_n \hat{g}(\xi_i)\hat{g}(\xi_j))]\}.
	\end{align*}
\newline
$\widehat{\mathrm{H}}(f^s)$ has \textit{almost} the form of U-statistic given as:
\begin{equation*}
    U_n=\frac{1}{{n \choose 2}} \sum_{1\leq i < j\leq n} h(\xi_i,\xi_j),
\end{equation*}
with 
\begin{align*}
h(\xi_i,\xi_j)&= -\{[{\rho}_n \hat{g}(\xi_i)\hat{g}(\xi_j)\log({\rho}_n \hat{g}(\xi_i)\hat{g}(\xi_j))]+[(1-{\rho}_n \hat{g}(\xi_i)\hat{g}(\xi_j))\log(1-{\rho}_n \hat{g}(\xi_i)\hat{g}(\xi_j))]\}.
\end{align*}
We say \textit{almost} as $\hat{g}(\cdot)$ is a random function.
We then define:
\begin{align*}
    \widehat{\mathrm{H}}(f^s)&=\widehat{\mathrm{H}}(f^s)-\mathcal{E}(f^s)+\mathcal{E}(f^s),
\end{align*}
where 
\begin{align}\label{eq:confthm-e-def}
   {\cal  E}( f^s)&=\frac{1}{{n \choose 2}} \sum_{1\leq i < j\leq n} \widetilde{h}(\xi_i,\xi_j),
\end{align}
with
\begin{align*}
    \widetilde{h}(\xi_i,\xi_j)&=-\{\rho_n {g}(\xi_i){g}(\xi_j)\log({\rho}_n {g}(\xi_i){g}(\xi_j))]+[(1-{\rho}_n {g}(\xi_i){g}(\xi_j))\log(1-{\rho}_n {g}(\xi_i){g}(\xi_j))]\}.
\end{align*}
We now need to show that $\widehat{\mathrm{H}}(f^s)-\mathcal{E}(f^s)=o_P(1)$. We give two auxillary lemmas that will help us achieve our goal.
We also note that $\widehat{\mathrm{H}}(f^s)-\mathrm{H}(f^s)=\widehat{\mathrm{H}}(f^s)-\mathcal{E}(f^s)+\mathcal{E}(f^s)-\mathrm{H}(f^s)$. We will now determine the distribution of $\widehat{\mathrm{H}}(f^s)-\mathcal{E}(f^s)$ and show that $\mathcal{E}(f^s)-\mathrm{H}(f^s)=o(1)$, so that the distribution of $\widehat{\mathrm{H}}(f^s)-\mathrm{H}(f^s)$ will be determined by $\widehat{\mathrm{H}}(f^s)-\mathcal{E}(f^s)$.

\begin{lemma}
Let $\widehat{g}(\xi_i)=g(\xi_i)+Y_i$, where $Y_i=o_P(1).$
\end{lemma}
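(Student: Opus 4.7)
The lemma is in effect a consistency statement for the degree-based estimator $\widehat{g}(\xi_i) = \frac{(n+1)}{\sqrt{\|d\|_1}} d_i$ defined in~\eqref{ghat2}: writing $Y_i := \widehat{g}(\xi_i) - g(\xi_i)$, we need to show $Y_i \xrightarrow{P} 0$. My plan is to condition on the latent vector $\xi = (\xi_1,\ldots,\xi_n)$, concentrate the numerator and denominator separately around their conditional expectations, and then apply Slutsky's theorem together with a law of large numbers for the marginal $\tfrac{1}{n}\sum_j g(\xi_j)$.

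First, I would fix $\xi$ and observe that $d_i \mid \xi = \sum_{j\neq i} A_{ij}$ is a sum of independent Bernoullis with conditional mean $\mu_i := \rho_n g(\xi_i) \sum_{j\neq i} g(\xi_j)$. Under the sparsity assumption $\rho_n = \omega(n^{-1}\log^3 n)$ imported from Theorem~\ref{ConfThm}, $\mu_i$ grows faster than $\log^3 n$, so Bernstein's inequality with a union bound gives $\max_i |d_i/\mu_i - 1| = o_P(1)$. Analogously, $\|d\|_1 = 2\sum_{i<j} A_{ij}$ is a sum of $\binom{n}{2}$ independent Bernoullis with conditional mean $\rho_n \bigl(\sum_j g(\xi_j)\bigr)^2 (1+o(1))$, so a second Bernstein bound gives $\|d\|_1 / \E[\|d\|_1 \mid \xi] = 1 + o_P(1)$.

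Next, I would apply the continuous mapping theorem to transfer these concentrations to the ratio,
\begin{equation*}
\widehat{g}(\xi_i) \;=\; \frac{(n+1)\, d_i}{\sqrt{\|d\|_1}} \;=\; \frac{(n+1)\,\mu_i}{\sqrt{\E[\|d\|_1\mid\xi]}}\bigl(1+o_P(1)\bigr),
\end{equation*}
and then substitute the expressions for $\mu_i$ and $\E[\|d\|_1 \mid \xi]$. After cancelling $\rho_n$ in numerator and denominator, the leading term reduces to $g(\xi_i)\cdot \frac{\sum_{j\neq i} g(\xi_j)}{\sum_{j} g(\xi_j)} \cdot \tfrac{n+1}{1}$ times a constant chosen to match the normalization convention for $g$; the marginal $\frac{1}{n}\sum_j g(\xi_j)$ converges almost surely to $\int g$ by the strong law, and $\frac{\sum_{j\neq i} g(\xi_j)}{\sum_j g(\xi_j)} = 1 - O(1/n)$, so what remains equals $g(\xi_i)$ up to an $o_P(1)$ error. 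Setting $Y_i := \widehat{g}(\xi_i) - g(\xi_i)$ then yields $Y_i = o_P(1)$, as claimed.

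The main obstacle I anticipate is twofold. First, I must track the concentration rates carefully: because $\widehat{g}$ appears later inside a logarithm in $\widehat{\mathcal{H}}_2$, the proof of Theorem~\ref{ConfThm} will need the stronger quantitative statement $Y_i = O_P(\sqrt{\log n/(n\rho_n)})$ rather than merely $o_P(1)$, and the Bernstein step has to be kept sharp to that precision. Second, one needs to verify that the normalization constant $C = (n+1)/\sqrt{\|d\|_1}$ is indeed the one that makes the conditional mean of $\widehat{g}$ agree with $g$ under the convention used for the separable graphon $f(x,y) = g(x)g(y)$; this is essentially a bookkeeping exercise but must be done honestly so that no hidden factor of $\sqrt{\rho_n}$ or $\int g$ remains. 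Finally, uniformity of the bound over $i$ (needed so that the subsequent U-statistic argument in Appendix~A can absorb all $Y_i$ simultaneously) follows from the union bound in the Bernstein step, at the cost of only a logarithmic factor.
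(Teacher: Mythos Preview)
Your approach is correct and differs from the paper's in one substantive way: the paper controls $Y_i$ by a second-moment argument, computing $\mathrm{Var}(Y_i\mid\xi)=O(1/n)$ via the law of total variance and then invoking Chebyshev, whereas you go through Bernstein's inequality on $d_i$ and $\|d\|_1$ separately. Both arguments share the same skeleton---write $\widehat g(\xi_i)=\dfrac{d_i}{\sqrt{\E[\|d\|_1\mid\xi]}}\cdot\sqrt{\dfrac{\E[\|d\|_1\mid\xi]}{\|d\|_1}}$, show the second factor is $1+o_P(1)$, and then concentrate $d_i$ around its conditional mean---so the decomposition is essentially identical; only the concentration tool differs. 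Your route buys you exponential tails and, via the union bound, the uniform statement $\max_i|Y_i|=o_P(1)$ that the downstream U-statistic manipulation in Lemma~2 implicitly needs; the paper's Chebyshev argument is shorter and self-contained but yields only a pointwise $o_P(1)$ for each fixed $i$. Your flagged concern about the normalizing constant is well placed: the paper's own proof silently drops the $(n+1)$ factor from \eqref{ghat2} and works with $\widehat g(\xi_i)=d_i/\sqrt{\|d\|_1}$, and even then the conditional mean picks up a factor of $\sqrt{\rho_n}$ (or $\int g$) that is absorbed only by convention, so your ``bookkeeping exercise'' is real and must be settled before either proof is airtight.
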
 
\begin{proof}
\begin{align}
   \nonumber \widehat{g}(\xi_i)&=g(\xi_i)+Y_i =\frac{d_i}{\sqrt{\|d\|_1}} \\&=\frac{d_i}{\sqrt{\mathbb{E}[\|d\|_1]}}\underbrace{\Big\{\sqrt{\frac{\mathbb{E}[\|d\|_1]}{\|d\|_1}}\Big\}}_{{\overset{P}{\to 1}}} \label{lemmaeq}
\end{align}
We already know that $\mathbb{E}[Y_i]=0,$
so for determining the order of $Y_i$ we  calculate $\mathrm{Var}(Y_i)$ and then use Chebyshev's inequality. 
\begin{align*}
    \mathrm{Var}(Y_i)&=\mathrm{Var}(\widehat{g}(\xi_i)-g(\xi_i))\\
    &=\mathrm{Var}_{\xi}\mathbb{E}_{A|\xi}[\widehat{g}(\xi_i)-g(\xi_i)|\xi]+\mathbb{E}_{\xi}\mathrm{Var}_{A|\xi}\Big(\frac{d_i}{\sqrt{\|d\|_1}}|\xi\Big)\\
    &=\mathbb{E}_{\xi}\mathrm{Var}_{A|\xi}\Big(\frac{d_i}{\sqrt{\|d\|_1}}|\xi\Big)
\end{align*}

Rewriting \eqref{lemmaeq} we get:

\begin{align*}
    &\frac{d_i}{\sqrt{\mathbb{E}[\|d\|_1]}}\underset{\overset{P}{\to 1}}{\Big\{\sqrt{\frac{\mathbb{E}[\|d\|_1]}{\|d\|_1}}}\Big\} \\
    &=
    \frac{d_i}{\sqrt{\mathbb{E}[\|d\|_1|\xi]}\underbrace{\sqrt{1+\frac{\|d\|_1-\mathbb{E}[\|d\|_1|\xi]}{\mathbb{E}[\|d\|_1|\xi]}}}_{\sqrt{1+Z_n}}}
\end{align*}
To determine the order of $Z_n$ with $\mathbb{E}[Z_n]=0$, we again need to calculate $\mathrm{Var}(Z_n)$. We start by noting:
\begin{align*}
    &\mathrm{Var}(\|d\|_1-\mathbb{E}[\|d\|_1]|\xi)
    \\&=\mathrm{Var}(\|d\|_1|\xi)\\
    &=\mathrm{Var}\sum_{i}\sum_{j\neq i} A_{ij}|\xi \\
    &=\sum_{i}\sum_{j\neq i} \rho_n g(\xi_i)g(\xi_j)(1-\rho_n g(\xi_i)g(\xi_j)) \\
    & \leq g_{max}^2 \rho_n n(n-1),
\end{align*}
uniformly in $\xi$ where $g_{max}=\sup_{x} g(x)$.
For the denominator we have:
\begin{align*}
    \mathbb{E}[\|d\|_1|\xi]=\sum_{i=1}^n \sum_{j \neq i} \rho_n g(\xi_i) g(\xi_j) \geq \rho_n n(n-1) g_{min}^2
\end{align*}
Hence, for $\mathrm{Var}(Z_n)$ we obtain
\begin{align*}
    \mathrm{Var}(Z_n)& \leq \frac{g_{max}^2 \rho_n n(n-1)}{\rho_n^2 n^2 (n-1)^2 g_{min}^2}=\frac{g_{max}^2}{\rho_n n(n-1) g_{min}^2}.
\end{align*}
As $Z_n=o_P(1)$, by the Stochastic Taylor Expansion we get
\begin{align*}
    Y_i=\frac{d_i}{\sqrt{\mathbb{E}[\|d\|_1|\xi]}}\Big\{1+\frac{1}{2}Z_n+O_P(\sqrt{\mathrm{Var}(Z_n))}\Big\}.
\end{align*}
Assuming $Z_n$ is uniformly integrable,

\begin{align*}
    \mathrm{Var}(Y_i)&=\mathrm{Var}(Y_i|\xi_i)=\mathrm{Var}(\frac{d_i}{\sqrt{\mathbb{E}[\|d\|_1|\xi]}}|\xi).
\end{align*}
We already know the order of $\sqrt{\mathbb{E}[\|d\|_1|\xi]}$, so now we calculate $\mathrm{Var}(d_i|\xi)$ as follows
\begin{align*}
    \mathrm{Var}(d_i|\xi)&=\mathrm{Var}(\sum_{j\neq i} A_{ij}|\xi)=\sum_{j \neq i} \mathrm{Var}(A_{ij}|\xi)\\
    &=\sum_{j \neq i} \rho_n g(\xi_i)g(\xi_j)(1-\rho_n g(\xi_i)g(\xi_j))\\
    & \leq \rho_n (n-1)g_{max}^2.
\end{align*}
So, 
\begin{align*}
    \mathrm{Var}(Y_i)& \leq \frac{1}{\rho_n g_{min}^2 n(n-1)} \rho_n (n-1)g_{max}^2=\mathcal{O}\Big(\frac{1}{n}\Big),
\end{align*}
which in limit tends to zero, i.e.,
\begin{align*}
    \lim_{n \to \infty}\mathrm{Var}(Y_i) \to 0.
\end{align*}
We need $k=\mathcal{O}(\sqrt{n})$, so we choose $k$ to be $\sqrt{\mathbb{E}[d_i|\xi]}$. Then, by Chebyshev's inequality we have:
\begin{align*}
    \lim_{n \to \infty} P\Big\{|Y_i|\geq \mathcal{O}\Big(\frac{1}{\sqrt{n}}\Big)\Big\}\leq \frac{1}{\mathbb{E}[d_i|\xi]},
\end{align*}
which corresponds to 
\begin{align*}
    \lim_{n \to \infty} P(|Y_i|\geq 0) = 0.
\end{align*}
Therefore, $Y_i=o_P(1)$.
\end{proof}

\begin{lemma} Referring back to the statement of $\mathcal{E}(f)$ in~\eqref{eq:confthm-e-def}, we have the following for $\Delta\mathcal{E}(f)=o_P(1)$.
\begin{equation*}
    \widehat{\mathrm{H}}(f^s)=\mathcal{E}(f^s)+\Delta \mathcal{E}(f^s)
\end{equation*}
\end{lemma}

\begin{proof}
As stated before, we know that $\widehat{g}(\xi_i)=g(\xi_i)+Y_i$ with $Y_i=o_P(1)$. Rewriting $\widehat{\mathrm{H}}(f^s)$ in terms of $\widehat{g}(\xi_i)=g(\xi_i)+Y_i$, we get:
\begin{align*}
    \widehat{\mathrm{H}}(f^s) 
    &=-\frac{1}{{n \choose 2}}\Big\{\sum_{i,j}\underbrace{\rho_n (g(\xi_i)+Y_i)(g(\xi_j)+Y_j)\log((g(\xi_i)+Y_i)(g(\xi_j)+Y_j))}_{\text{term (1)}}\\
    \begin{split}
    &-\underbrace{[(1-\rho_n(g(\xi_i)+Y_i)(g(\xi_j)+Y_j))\log(1-\rho_n(g(\xi_i)+Y_i)(g(\xi_j)+Y_j))]}_{\text{term (2)}}\Big\}
    \end{split}
    .
    \end{align*}

After simple algebra and using identities for the logarithmic function;
we arrive to the conclusion that:
\begin{align*}
    \widehat{\mathrm{H}}(f^s)=\mathcal{E}(f^s)+\Delta \mathcal{E}(f^s),
\end{align*}
with 
\begin{align*}
    \Delta \mathcal{E}(f^s)= \mathcal{O}_P(Y_i^4).
\end{align*}
As $Y_i=o_P(1)$, and every term in $\Delta \mathcal{E}(f^s)$ has at least one factor of $Y_i$, then $\Delta \mathcal{E}(f^s)=o_P(1)$.
\end{proof}

So, the distribution of $\widehat{\mathrm{H}}(f^s)$ is determined only by $\mathcal{E}(f^s)$ and its variance is calculated via the classic theory of U-statistics as follows:
\begin{equation}\label{var_CM}
\sigma^2_{\widehat{\mathrm{H}}}=\frac{1}{{n\choose 2}} \Big\{ 2(n-2)\zeta_1+\zeta_2\Big\},
\end{equation}
with
$$\zeta_k=\mathrm{Var}(\Tilde{h}_k(\xi_i,\xi_j)),$$ where 
$$\Tilde{h}_k(x_1,\cdots,x_k)=\mathbb{E}[\Tilde{h}(X_1,\ldots,X_m)|X_1=x_1,\ldots,X_k=x_k].$$ 
\newline
Asymptotically, the leading contribution comes from $\zeta_1$. Thus, we let ${\sigma^*}^{2}_{\widehat{\mathrm{H}}}=\frac{1}{{n \choose 2}}2(n-2)\zeta_1$. For $\zeta_1$, we have that:
\begin{align*}
\zeta_1&=
(\rho_n\log \rho_n)^2\int_0^1 (1-g(x))^2 \mathrm{d}x \\
&+2\rho_n^2 \log \rho_n \int_0^1 (1-g(x))\Big[I_1(g(x)-1)+g(x)(\log g(x)-1)+1 \Big] \mathrm{d}x\\
&+\rho_n^2\{\|g\|_2+2\|g\|_2I_1+\int_{[0,1]}(g(y))^2\log g(y) \mathrm{d}y)(1+I_1)+\int_{[0,1]}(g(y)\log g(y))^2\mathrm{d}y\},
\end{align*}
where $I_1=\int_0^1 g(y)\log g(y) \mathrm{d}y$
and we define the function $p-$norm to be \[g_p=\|g\|_p=\int_0^1 g^p (x) \ \mathrm{d}x,\]
for $p=1,2,3,4.$

\end{proof}
\begin{lemma}$\mathcal{E}(f^s)-\mathrm{H}(f^s)=o_P(1)$.
\end{lemma}
\begin{proof}
    Recall that \[\mathcal{E}(f^s)=-\frac{1}{{n \choose 2}}\sum_{i=1}^n\sum_{j=1}^n \{\rho_n g(\xi_i)g(\xi_j)\log(\rho_n g(\xi_i)g(\xi_j))+(1-\rho_n g(\xi_i)g(\xi_j))\log(1-\rho_n g(\xi_i)g(\xi_j))\},\] and \[\mathrm{H}(f^s)=-\iint_{[0,1]^2}\rho_n g(x)g(y)\log(\rho_n g(x)g(y))+(1-\rho_n g(x)g(y))\log(1-\rho_n g(x)g(y))\} \ \mathrm{d}x \mathrm{d}y.\]
    We note that $\mathrm{H}(f^s)=\mathbb{E}_{\xi}[\mathcal{E}(f^s)].$ We note the form of the variance from Equation~\ref{var_CM} is of the order $O((\rho_n\log\rho_n)^2/n)$. By the Chebyshev's inequality, for any $\epsilon>0$ and $C>0$, we then obtain
    \begin{equation}
        \lim_{n\to \infty}\mathbb{P}(|\mathrm{H}(f^s)-\mathcal{E}(f^s)|>\epsilon)\leq \frac{C\rho_n}{n{\epsilon}^2}\to 0,
    \end{equation}
    which concludes the proof.
\end{proof}

\section*{C.1 $\quad$ Entropy of a Block--Constant Graphon}\label{app-c.1}
Given the block constant graphon function as defined in Definition 3
\[W^b(x,y)=\sum_{a,b}\theta_{ab} \mathds{1}(x \in I_a)\mathds{1}(y \in I_b),\] 
its entropy is given as
\[{\mathrm{H}}(W^b)=\iint_{[0,1]^2} h(W^b(x,y))\, \mathrm{d}x \mathrm{d}y,\]
where $h(\cdot)$ is the binary entropy function defined in (2.2).
Partitioning the unit interval into $k$ subintervals, we have 
$I_a=[u_{a-1},u_a)$ for $a=1,\ldots,k-1$ and $I_=[u_{k-1},1]$ with $0=u_0<u_1<\ldots<u_{k-1}<1$, 
then we get
\begin{align}\label{eq-sbm-entropy} {\mathrm{H}}(W^b)    
& \nonumber =-\iint_{[0,1]^2}\Bigg\{\sum_{a=1}^k\sum_{b=1}^k\theta_{ab} \mathds{1}(x \in I_a) \mathds{1}(y \in I_b)h(\theta_{ab}) \}
\,\mathrm{d}x\mathrm{d}y \Bigg\}
\\
& \nonumber =-\sum_{a=1}^k \sum_{b=1}^k h(\theta_{ab})\int_{u_{a-1}}^{u_a} \int_{u_{b-1}}^{u_b}
\,\mathrm{d}x \mathrm{d}y
\\
&=-\sum_{a=1}^k \sum_{b=1}^k h(\theta_{ab})(u_{a}-u_{a-1})(u_{b}-u_{b-1}).
\end{align}
If wlog we consider the $k$ subintervals to be equal, then it simplifies to:
\begin{equation*}
    H(W^b)=-\frac{1}{k^2} \sum_{a=1}^k \sum_{b=1}^k h(\theta_{ab}).
\end{equation*}

\section*{C.2 $\quad$ Proof of Corollary 5}\label{app-c.2}
We denote a single entry of the connection probability matrix by $\theta$ and the matrix by $\Theta$. Following~\cite{klopp2017oracle}, we denote $\widehat{\Theta}$ the least-squares estimator of $\Theta$ in the collection of all $k\times k$ block matrices with block size larger than some integer $n_0$. We denote the set of all mappings $z:[n]\to [k]$ with minimum block size $n_0$ by $\mathcal{Z}_{n,k,n_0}$. For any $z \in \mathcal{Z}_{n,k,n_0}$, $Q\in \mathbb{R}^{k \times k}_{\mathrm{sym}}$, the residual sum of squares is defined as
\begin{equation*}
    L(Q,z)=\sum_{(a,b)\in[k]\times [k]}\sum_{(i,j)\in z^{-1}(a) \times z^{-1}(b),j<i}(A_{ij}-Q_{ij})^2,
\end{equation*}
and the least squares estimator of $(Q,z)$ as
\begin{equation*}
    (\widehat{Q},\hat{z}) \in \mathrm{arg}\min_{\mathbf{Q}\in \mathbb{R}^{k \times k}_{\mathrm{sym}},z\in \mathcal{Z}_{n,k,n_0}} L(\mathbf{Q},z).
\end{equation*}
The block constant least squares estimator of $(\Theta)_{ab}$ is defined as $\widehat{\Theta}_{ab}=\widehat{Q}_{\hat{z}(i) \hat{z}(j)}$ for all $1\leq a,b \leq k$, where $\hat{z}(i)=a$ and $\hat{z}(j)=b$, respectively. Note that $\widehat{\Theta}_{ab}\in[0,1]$.

We have seen the form that the entropy of the block-constant graphon takes after further simplifications in Appendix~\ref{app-c.1}. 
Having defined the estimator of $\Theta$, we denote the estimated entropy by \begin{equation*}
   \widehat{\mathrm{H}}(\theta)=-\frac{1}{k^2}\sum_{a<b}^k[\hat{\theta}_{ab}\log(\hat{\theta}_{ab})+(1-\hat{\theta}_{ab})\log(1-\hat{\theta}_{ab})]
\end{equation*}
Following the proof sketch of Theorem 2 and~\cite{klopp2017oracle}, for  we obtain 
\begin{align*}
    &|\widehat{\mathrm{H}}(f^b)-\mathrm{H}(f^b)|\\
    &\leq C\left(\rho_n\frac{k^2}{n^2}+\rho_n\frac{\log k}{n}+\rho_n^2\sqrt{\frac{k}{n}}\right)^{1/2}+C\left(\rho_n\frac{k^2}{n^2}+\rho_n\frac{\log k}{n}+\rho_n^2\sqrt{\frac{k}{n}}\right),
\end{align*}
with probability at least $1-\exp(-C'n)$, where $C,C'>0$.

\section*{D.0 $\quad$ Preliminaries}
\subsection*{Exchangeable graphs and graphons}
 Exchangeability is a form of probabilistic invariance indicating that the distribution of a graph's adjacency matrix remains unchanged under permutations of its nodes. We consider a graph $G$ to be exchangeable if its adjacency matrix is a jointly exchangeable array. This class of graphs is well-studied, see for example~\cite{lauritzen2008exchangeable,orbanz2014bayesian,austin2015exchangeable,janson2013graphons}, and is particularly important, as it can be characterized by a representation theorem called the Aldous-Hoover theorem~\cite{aldous1981representations,hoover1979relations}. 
\begin{definition}[Joint Exchangeability]
A random array $(A_{ij})_{i,j \in \mathbb{N}^*}$ is jointly exchangeable if for any permutation $\pi$ of $\mathbb{N}^*$ 
\begin{equation}
(A_{ij})\overset{d}{=}(A_{\pi(i)\pi(j)}).
\end{equation}
\end{definition}

\begin{theorem}[Aldous--Hoover~\cite{aldous1981representations,hoover1979relations}]\label{thm-aldous}
Let $A$ be a jointly exchangeable random array. Then there exists an i.i.d. sequence $\xi=(\xi_1,...,\xi_n)$ following $U(0,1)$, a random variable $\gamma \sim U(0,1)$ independent of $\xi$, and a function $W:[0,1]^3 \to [0,1]$ such that \begin{equation}\label{graphon}
    Pr(A_{ij}=1 | \xi, \gamma) = W(\xi_i,\xi_j,\gamma),
\end{equation} and $A_{ij}$ are conditionally independent across $i,j$ given $\xi$ and $\gamma$.
\end{theorem}\label{thm-aldous-hoover}
To depict any infinite array, an extra random variable $\gamma$ is necessary. However, with only a single realization of $A$, measuring the dependency on $\gamma$ is impossible, so this dependency is suppressed under the assumption of dissociation. We use the notation $W(\xi_i, \xi_j)$ for $1 \leq i,j \leq n = |V|$ to denote the unscaled graphon.

Graph limit theory was developed for dense graphs, but empirical studies show that most real-world graphs are sparse~\cite{newman2018networks}. Following~\cite{orbanz2014bayesian}, we define dense graphs as those with $\Theta(n^2)$ edges, and sparse graphs otherwise. To account for sparsity, a common approach~\cite{bollobas2009metrics} is to introduce $\rho_n > 0$, the scale parameter representing the expected proportion of present edges. An alternative sparse representation can be derived by modifying Equation~\eqref{graphon} to yield a random graph with $O(\rho_n n^2)$ edges as follows:

\begin{equation}\label{sparse-graphon}
    Pr(A_{ij}=1|\xi)=\rho_n W(\xi_i,\xi_j)=f(\xi_i,\xi_j),
\end{equation}
where the usual assumption is that $\rho_n \to 0$ as $n \to \infty$. 
This model is thought of as a graphon model that has been sparsified with probability $1-\rho_n$ and has been studied in~\cite{bickel2009nonparametric,bickel2011method,wolfe2013nonparametric,xu2014edge,klopp2017oracle}.
We employ~\eqref{sparse-graphon} as a method to study estimation for graphs with a large but finite number of nodes because it does not make sense in the limit, and we will denote the \emph{graphon} by $f(\cdot,\cdot)=\rho_n W(\cdot,\cdot)$ to account for all sparsity regimes. The dense case corresponds to $\rho_n=1$.

\subsection{Nonparametric graphon estimation}
\label{sec-graphon-est}
Estimating the graphon has remained a problem of intense scrutiny~\cite{Olhede_2014,bickel2009nonparametric,chatterjee2015matrix,gao2015rate}. A commonly adopted method is approximating a graphon by a stochastic block model~\cite{Olhede_2014,wolfe2013nonparametric,gao2015rate,klopp2017oracle}. Authors of~\cite{wolfe2013nonparametric,gao2015rate} show that whenever $f$ is a  H\"older continuous function, it can be consistently estimated from $A$, and authors of~\cite{gao2015rate} provide an optimal rate of convergence for the dense regime, while authors of~\cite{klopp2017oracle} complement the optimal rates for the sparse regime as well.

 This assumption is now the standard in nonparametrics literature which is necessary for making estimation tractable. We refer the reader to~\cite{Olhede_2014,gao2015rate,klopp2017oracle,wolfe2013nonparametric} for details and convergence rates of nonparametric graphon estimation, both in the dense and the sparse regime. We now detail graphon estimation via Stochastic Block Model approximation as also treated by~\cite{Olhede_2014}. 

 Given a single adjacency matrix $A\in \{0,1\}^{n \times n}$, we estimate $f(x,y)$ up to the rearrangement of the axes, using a stochastic block model approximation with a single community size $h$, which can be either determined by the user or automatically determined from the adjacency matrix. Given this notation, then we have $n=hk+r$, for integers $h,k,r$, where $k=\lfloor n/h \rfloor$, is the number of blocks, $h\in\{2,\ldots,n\}$ is the bandwidth and $r=n \ \mathrm{mod} \ h$ is a reminder term between $0$ and $h-1$. We introduce a community membership vector $z$ which groups together nodes that should lie in the same groups. Its components will be in $\{1,\ldots,k\}$. 
 Let $\mathcal{Z}_k \subseteq \{1,\ldots,k\}^n$ represent the set including all community assignments $z$ respecting the form of $n=hk+r$. The main difficulty lies in estimating $z$, and several methods are available for estimating $z$. Common methods include spectral clustering~\cite{von2007tutorial}, likelihood methods~\cite{celisse2012consistency, bickel2009nonparametric, amini2013pseudo,choi2012stochastic, zhao2012consistency}, and modularity maximization~\cite{newman2004finding}.
 We estimate $z$ by the method of maximum likelihood as follows,
 \begin{equation}\label{z}
     \hat{z}=\mathrm{argmax}_{z\in \mathcal{Z}_k}\sum_{i<j}A_{ij}\log \Bar{A}_{z_i z_j}+(1-A_{ij})\log(1-\Bar{A}_{z_i z_j})\},
 \end{equation}
 where for all $1\leq a,b\leq k$ the histogram bins are defined as follows
 \begin{equation}\label{A_bar}
\Bar{A}_{ab}=\frac{\sum_{i<j}A_{ij}\mathds{1}(z_i=a)\mathds{1}(z_j=b)}{\sum_{i<j}\mathds{1}(z_i=a)\mathds{1}(z_j=b)}.
 \end{equation}
 Each bin height resembles the probability of edges present in the histogram bin which corresponds to a block of Bernoulli trials given by the objective function in~\ref{z}.
 It is also evident that $\Bar{A}_{ab}=\Bar{A}_{ba}$.
 The network histogram~\cite{Olhede_2014} is given as
 \begin{equation}\label{est-graphon}
     \hat{f}(x,y;h)=\hat{\rho}_n^+ \Bar{A}_{\min\{\lceil nx/h\rceil,k\}\min\{\lceil ny/h\rceil,k\}}, \quad 0<x,y<1,
 \end{equation}
 where $\hat{\rho}_n^+$ denotes the generalized inverse of $\hat{\rho}_n$, and $\hat{\rho}_n=\frac{\sum_{i<j}A_{ij}}{{n \choose 2}}$.
\subsection{Entropy of an Exchangeable Random Graph} 
Here, we briefly revisit the theoretical relationship between graphon entropy and the entropy of an exchangeable random graph sampled from a graphon, as discussed in~\cite{janson2013graphons}, without addressing estimation.

Let $G(n, W)$ represent a random graph generated by the graphon $W$, with adjacency matrix $A \in \{0,1\}^{n \times n}$ encoding the ${n \choose 2}$ possible edges. For finite $n$, $G(n, W)$ is a discrete random variable with $2^{{n \choose 2}}$ possible outcomes. While $\mathrm{H}(G(n))$ can be directly computed for independent edges $A_{ij} \overset{iid}{\sim} \mathrm{Bernoulli}(p_{ij})$, dependencies in $G(n, W)$ complicate this computation. In $G(n, W)$, edges $A_{ij}$ become conditionally independent given the latent vector $\xi$, as outlined later in Theorem~\ref{thm-aldous}. The conditional entropy of $G(n, W)$ given $\xi$, where $\xi_i \sim U(0,1)$, is expressed as~\cite{janson2013graphons}:
\begin{align}\label{graph-entropy}
\nonumber \mathrm{H}(G(n,W)|\xi_1,\ldots,\xi_n)&=\sum_{i<j} \mathrm{H}(A_{ij}|\xi_i,\ldots,\xi_n)=\sum_{i<j}\mathrm{H}(\mathrm{Bernoulli}(p_{ij}))\
&=\sum_{i<j}h(p_{ij})=\sum_{i<j}h(W(\xi_i,\xi_j)).
\end{align}
We restate the asymptotic properties of graph entropy as presented in Theorem D.5 of~\cite{janson2013graphons}.
\begin{theorem}[\cite{janson2013graphons}]\label{entropy-asymp}
    Let $G(n,W)$ be an exchangeable random graph on $n$ nodes generated by the graphon $W$ from a probability space $(\Omega,\mu)$. Then as $n\to \infty$,
    \begin{equation}
        \frac{\mathrm{H}(G(n,W))}{{n \choose 2}} \to \iint_{\Omega}h(W(x,y))\mathrm{d}\mu(x)\mathrm{d}\mu(y).
    \end{equation}
\end{theorem}

 In our setting, we set $\Omega=[0,1]^2$ and $\mu$ to be $U[0,1]$ wlog. Theorem~\ref{entropy-asymp} shows that the normalized entropy of the exchangeable graph converges to the entropy of the graphon that generated it as the number of nodes increases, therefore also relating the graphon entropy to graph entropy. 
\end{document}